\newcommand{\sfrac}[2]{\frac{#1}{#2}}
\newcolumntype{L}[1]{>{\raggedright\let\newline\\\arraybackslash\hspace{0pt}}m{#1}}
\newcolumntype{C}[1]{>{\centering\let\newline\\\arraybackslash\hspace{0pt}}m{#1}}
\newcolumntype{M}[1]{>{\centering\let\newline\\\arraybackslash\hspace{0pt}$}m{#1}<{$}}
\newcolumntype{R}[1]{>{\raggedleft\let\newline\\\arraybackslash\hspace{0pt}}m{#1}}
\newlist{rdescription}{description}{1}
\setlist[rdescription]{leftmargin =\dimexpr\eqboxwidth{Des}+\labelsep}}%
\tikzset{
%	loop right/.append style={
%		every loop/.append style={
%			out=30, in=-30, looseness=3,
%			shorten >= 0pt
%		}
%	},
%	loop above/.append style={
%		every loop/.append style={
%			out=120, in=60, looseness=3,
%			shorten >= 0pt
%		}
%	},
%	loop left/.append style={
%		every loop/.append style={
%			out=210, in=150, looseness=3,
%			shorten >= 0pt
%		}
%	},
%	loop below/.append style={
%		every loop/.append style={
%			out=300, in=240, looseness=3,
%			shorten >= 0pt
%		}
%	},
    % every edge/.append style={
    %     every node/.append style={
    %         execute at begin node=$,
    %         execute at end node=$
    %         % \setlength{\jot}{-5pt}\begin{gathered},
    %         % font=\ttfamily
    %     }
    % },
    every state/.append style={
        execute at begin node=$,
        execute at end node=$
    },
    % >=stealth,
    % node distance = .7cm and 3.5cm,
    % on grid,
    % auto,
    initial text = 
}
\declaretheorem{theorem}
\declaretheorem[sibling=theorem]{lemma,corollary}
\theoremstyle{definition}
\newtheorem{definition}[theorem]{Definition}
\crefname{fact}{fact}{facts}
\Crefname{fact}{Fact}{Facts}
\crefname{constraint}{constraint}{constraints}
\Crefname{constraint}{Constraint}{Constraints}
\crefname{sentence}{sentence}{sentences}
\Crefname{sentence}{Sentence}{Sentences}
\crefname{expression}{expression}{expressions}
\Crefname{expression}{Expression}{Expressions}
\NewDocumentEnvironment{delineate}{m}{\textcolor{cyan!70!black!}{> > > > Begin: #1 > > > >}}{\textcolor{red!70!black!}{< < < < End: #1 < < < <}}
\mathchardef\mhyphen="2D
\DeclarePairedDelimiter\paren\lparen\rparen
\newcommand{\oh}[1]{\ensuremath{\mathit{o}\paren*{#1}}}
\newcommand*{\IfItalicsTF}{%
  \ifx\f@shape\my@test@it
    \expandafter\@firstoftwo
  \else
    \expandafter\@secondoftwo
  \fi
}
\newcommand*{\my@test@it}{it}
\newcommand{\contextsensitivemathrm}[1]{\IfItalicsTF{\mathit{#1}}{\mathrm{#1}}}
\newcommand\machineformat[1]{\ensuremath{\contextsensitivemathrm{#1}}}
\newcommand\langclassformat[1]{\ensuremath{\mathsf{#1}}}
\newcommand{\langformat}[1]{\ensuremath{\mathtt{#1}}}
\NewDocumentCommand{\DTIME}{ o m }{\langclassformat{DTIME\IfValueTF{#1}{\paren[#1]{#2}}{\paren*{#2}}}}
\NewDocumentCommand{\defineautomata}{ m m }{%
    % automata (plain), e.g., \dfa
    \expandafter\NewDocumentCommand\csname#1fa\endcsname{}{\ensuremath{\machineformat{#1fa}}}%
    % rt-automata, e.g., \rtdfa
    \expandafter\NewDocumentCommand\csname rt#1fa\endcsname{}{\ensuremath{\machineformat{rt\mhyphen#1fa}}}%
    % 1automata, e.g., \odfa[]
    \expandafter\NewDocumentCommand\csname o#1fa\endcsname{ o }{\ensuremath{\machineformat{1#1fa\IfValueT{##1}{\paren*{##1}}}}}%
    % 2automata, e.g., \tdfa[]
    \expandafter\NewDocumentCommand\csname t#1fa\endcsname{ o }{\ensuremath{\machineformat{2#1fa\IfValueT{##1}{\paren*{##1}}}}}%
    % machine (plain), e.g., \ptm
    \expandafter\NewDocumentCommand\csname#1tm\endcsname{}{\ensuremath{\machineformat{#1tm}}}%
    %% LANGUAGES, e.g., \dfa<> or \tdfa[]<>
    % \expandafter\NewDocumentCommand\csname #2FA\endcsname{ d<> }{\ensuremath{\eL{\machineformat{#1fa\IfValueT{##1}{, ##1}}}}}%
    % \expandafter\NewDocumentCommand\csname RT#2FA\endcsname{ d<> }{\ensuremath{\eL{\machineformat{rt\mhyphen#1fa\IfValueT{##1}{, ##1}}}}}%
    % \expandafter\NewDocumentCommand\csname O#2FA\endcsname{ o d<> }{\ensuremath{\eL{\machineformat{1#1fa\IfValueT{##1}{\paren*{##1}}}\IfValueT{##2}{, ##2}}}}%
    % \expandafter\NewDocumentCommand\csname T#2FA\endcsname{ o d<> }{\ensuremath{\eL{\machineformat{2#1fa\IfValueT{##1}{\paren*{##1}}}\IfValueT{##2}{, ##2}}}}%
    \expandafter\NewDocumentCommand\csname #2FA\endcsname{ d<> }{\ensuremath{\langclassformat{#2FA}\IfValueT{##1}{\paren*{##1}}}}%
    \expandafter\NewDocumentCommand\csname RT#2FA\endcsname{ d<> }{\ensuremath{\langclassformat{RT\mhyphen#2FA}\IfValueT{##1}{\paren*{##1}}}}%
    \expandafter\NewDocumentCommand\csname O#2FA\endcsname{ o d<> }{\ensuremath{\langclassformat{1#2FA}\IfValueTF{##1}{\paren*{##1\IfValueT{##2}{, ##2}}}{\IfValueT{##2}{\paren*{##2}}}}}%
    \expandafter\NewDocumentCommand\csname T#2FA\endcsname{ o d<> }{\ensuremath{\langclassformat{2#2FA}\IfValueTF{##1}{\paren*{##1\IfValueT{##2}{, ##2}}}{\IfValueT{##2}{\paren*{##2}}}}}%
}
\newcommand{\ith}[2][th]{\ensuremath{#2}\text{#1}}
\newcommand\setneg[1]{\ensuremath{\overline{#1}}}
\newcommand{\PL}{\langformat{PAD}}
\newcommand{\orderedeq}{\ensuremath{EQ}}
\newcommand{\acc}{\ensuremath{\mathrm{acc}}}
\newcommand{\rej}{\ensuremath{\mathrm{rej}}}
\newcommand{\qacc}{\ensuremath{q_{\acc}}}
\newcommand{\qrej}{\ensuremath{q_{\rej}}}
\newcommand{\sacc}{\ensuremath{s_{\acc}}}
\newcommand{\srej}{\ensuremath{s_{\rej}}}
\newcommand{\lend}{\ensuremath{\rhd}}
\newcommand{\rend}{\ensuremath{\lhd}}
\newcommand{\reject}{\emph{reject}}
\setlist{itemsep=0pt}
\newlist{observations}{enumerate}{1}
\setlist[observations]{
    % leftmargin=\widthof{Difference~}+2em,
    % label=Difference~\arabic{*}.,
    label=\arabic{*}.,
    ref=\arabic{*}
}
\crefname{observationsi}{observation}{observations}
\Crefname{observationsi}{Observation}{Observations}
\newlist{differences}{enumerate}{1}
\setlist[differences]{
    % leftmargin=\widthof{Difference~}+2em,
    % label=Difference~\arabic{*}.,
    label=\arabic{*}.,
    ref=\arabic{*}
}
\crefname{differencesi}{difference}{differences}
\Crefname{differencesi}{Difference}{Differences}
\newlist{strategies}{enumerate}{1}
\setlist[strategies]{
    % leftmargin=\widthof{Difference~}+2em,
    % label=Difference~\arabic{*}.,
    label=Strategy \arabic{*}:,
    ref=\arabic{*},
    labelwidth=\widthof{Strategy 1:},
    leftmargin=\parindent+\labelwidth+\labelsep
}
\crefname{strategiesi}{strategy}{strategies}
\Crefname{strategiesi}{Strategy}{Strategies}
\newlist{turingenum}{enumerate}{1}
\setlist[turingenum]{
    noitemsep,
    labelsep=.5em,
    leftmargin=1em+\parindent,
    labelwidth=1em,
    label=(\Roman{*}),
    ref=\Roman{*}
}
\crefname{turingenumi}{Stg.}{Stgs.}
\Crefname{turingenumi}{Stage}{Stages}
\newcommand{\narrowfont}[3]{\scalebox{#1}[1.0]{#3}}
\newcommand{\turinglabelformat}[1]{\narrowfont{0.85}{-20}{\scriptsize#1}}
\newlength{\turinglabelgap}
\DeclareExpandableDocumentCommand{\IfNoValueOrEmptyTF}{mmm}
 {
  \IfNoValueTF{#1}{#2}
   {
    \tl_if_empty:nTF {#1} {#2} {#3}
   }
 }
\NewDocumentEnvironment{turing}{ O{} m m }
 {\IfNoValueOrEmptyTF{#1}{\setlength{\turinglabelgap}{0em}}{\setlength{\turinglabelgap}{0.5em}}\small\begin{enumerate}[labelsep=0pt,align=left,parsep=0pt,leftmargin=\widthof{\turinglabelformat{#1}}+\turinglabelgap, % \widthof{$#2={}$}+\parindent,
 listparindent=0pt] % ,labelwidth=\widthof{$#2={}$}]
  \item[]\ignorespaces#3\\[0.5em]
  \begin{turingenum}[
    nosep,
    align=Center,
    labelwidth=\widthof{\turinglabelformat{#1}},
    labelsep=\turinglabelgap,
    leftmargin=0em % \labelwidth %\widthof{$#2={}$}-0.6em %+1.5em
  ]}
 {\unskip\end{turingenum}\end{enumerate}}
\newcommand{\optionaldesc}[3]{%
  \phantomsection
#1\protected@edef\@currentlabel{#1}\protected@edef\cref@currentlabel{%
    [#3][\arabic{#3}][\cref@result]%
    #1%
  }\label{#2}%
}
\NewDocumentCommand{\defineturingitem}{ m m }{%
    \expandafter\NewDocumentCommand\csname#1item\endcsname{ o o m }{\IfValueTF{##1}{\item[\turinglabelformat{\optionaldesc{##1}{\IfValueTF{##2}{##2}{stg:##1}}{turingenumi}}]\begin{adjustwidth}{#2}{0pt}\ignorespaces##3\end{adjustwidth}}{\item[]\begin{adjustwidth}{#2}{0pt}\ignorespaces##3\end{adjustwidth}}}%
}
\def\squiggly{\bgroup \markoverwith{\lower3.9\p@\hbox{\sixly \scalebox{1.2}[0.65]{\char58}}}\ULon}
\def\mysout{\leavevmode\bgroup\def\ULthickness{1pt}\ULdepth=-.4ex\ULset}
\newcommand{\stkout}[1]{\begingroup\ifmmode\text{\mysout{\ensuremath{#1}}}\else\mysout{#1}\fi\endgroup}
\newcommand{\utkanworry}[1]{\textcolor{red!45!black!90}{\ifmmode\smash[b]{\squiggly{#1}}\else\squiggly{#1}\fi}}
\renewcommand{\textvisiblespace}[1][.7em]{%
  \makebox[#1]{%
    \kern.07em
    \vrule height.5ex
    \hrulefill
    \vrule height.5ex
    \kern.07em
  }% <-- don't forget this one!
}
\DeclareMathAlphabet{\mathsl}{\encodingdefault}{\familydefault}{m}{sl}
\SetMathAlphabet{\mathsl}{bold}{\encodingdefault}{\familydefault}{bx}{sl}
\ifodd\value{page}
\author{A. C. Cem Say\\\href{mailto:say@bogazici.edu.tr}{\texttt{say@bogazici.edu.tr}}}
\title{Time hierarchies for sublogarithmic-space quantum computation}
\date{\small \itshape
  % one line per affiliation, no postal codes, grant numbers or similar
  Department of Computer Engineering,
  Bo\u{g}azi\c{c}i University,
  Bebek 34342,
  \.{I}stanbul,
  T\"{u}rkiye
  % INRIA, France\\
  % ICube, university of Strasbourg, France\\
  % Alma Mater, campus universalis, terra incognita
  }
\newcommand{\ie}{i.e.}
\NewDocumentCommand{\padlang}{ e{_} m }{\ensuremath{\PL_{\IfValueT{#1}{#1}}\paren*{#2}}}
\NewDocumentCommand{\padpair}{ s m }{%
    \IfBooleanTF{#1}%
    {\ensuremath{\paren*{\padlang{#2}, \padlang{\setneg{#2}}}}}%
    {\ensuremath{\paren{\padlang{#2}, \padlang{\setneg{#2}}}}}%
}
\NewDocumentCommand{\branch}{ m o }{\makebox{\textsc{[#1]}\IfValueT{#2}{ Branch #2:}}}
\NewDocumentCommand{\branchpr}{ m o }{\branch{\,Probability: \ensuremath{#1}\,}[#2]}
\NewDocumentCommand{\branchperc}{ m o }{\branch{#1\% prob.}[#2]}
\NewDocumentCommand{\vartextvisiblespace}{ O{.7em} O{.7ex} }{%
  \makebox[#1]{%
    \kern.07em
    \vrule height#2
    \hrulefill
    \vrule height#2
    \kern.07em
  }% <-- don't forget this one!
}
\newcommand{\estring}{\ensuremath{\lambda}}
\newcommand{\blanksymb}{\ensuremath{\text{\vartextvisiblespace}}}
\NewDocumentCommand{\Qpub}{ e{_} }{\ensuremath{Q_{\textnormal{pub}\IfValueT{#1}{,#1}}}}
\NewDocumentCommand{\Qpri}{ e{_} }{\ensuremath{Q_{\textnormal{pri}\IfValueT{#1}{,#1}}}}
\NewDocumentCommand{\Qcom}{ e{_} }{\ensuremath{Q_{\textnormal{com}\IfValueT{#1}{,#1}}}}
\NewDocumentCommand{\bpub}{ e{_} }{\ensuremath{b_{\textnormal{pub}\IfValueT{#1}{,#1}}}}
\NewDocumentCommand{\bpri}{ e{_} }{\ensuremath{b_{\textnormal{pri}\IfValueT{#1}{,#1}}}}
\def\@testdef #1#2#3{%
  \def\reserved@a{#3}\expandafter \ifx \csname #1@#2\endcsname
 \reserved@a  \else
\typeout{^^Jlabel #2 changed:^^J%
\meaning\reserved@a^^J%
\expandafter\meaning\csname #1@#2\endcsname^^J}%
\@tempswatrue \fi}
\begin{document}

\maketitle

%%
%% The abstract is a short summary of the work to be presented in the
%% article.
\begin{abstract}
\noindent We present new results on the landscape of problems that can be solved by quantum Turing machines (QTM's) employing severely limited amounts of memory. In this context, we demonstrate two infinite time hierarchies of complexity classes within the ``small space'' regime: For all $i\geq 0$, there is a language that can be recognized by a constant-space  machine in $2^{O(n^{1/2^i})}$ time, but not by any sublogarithmic-space  QTM in $2^{O(n^{1/2^{i+1}})}$ time. For quantum machines operating within $o(\log \log n)$ space, there exists another hierarchy, each level of which corresponds to an expected runtime of $2^{O((\log n)^i)}$ for a different positive integer $i$. We also improve a quantum advantage result, demonstrating a language that can be recognized by a polynomial-time constant-space QTM, but not by any classical machine using $o(\log \log n)$ space, regardless of the time budget. The implications of our findings for quantum space-time tradeoffs are discussed.
% REWRITE!!!!!!!!!!!!!!!!!! There are famous polynomial time and optimally exponential time algorithms for 2qcfa's. We describe two  infinite families of languages that  that can be recognized within different subexponential time bounds, forming the first known time hierarchies for languages recognized with bounded error by a quantum machine model. One such family is associated with quasi-polynomial time 2qcfa algorithms, and we use the language in the lowest level of that hierarchy to prove that polynomial-time 2qcfa's can solve some problems that are impossible for 2pfa's, regardless of how much time those classical machines are allowed to use.
  \vspace{1em}

\noindent\textbf{Keywords:}  quantum Turing machines, quantum finite automata, time complexity, sublogarithmic space complexity
\end{abstract}

\section{Introduction}\label{sec:intro}
Quantum computation presents several interesting challenges, both theoretical and technological. Although fast quantum algorithms have been discovered for some important problems, this does not in itself constitute proof of the superiority of quantum computers  over classical ones in cases (like integer factorization \cite{S97})  where  the problem in question is not known to be classically difficult. One area in which several unconditional quantum advantage results have already been established is the study of computation under small (e.g. sublogarithmic) space bounds. Examining the capabilities and limitations of such memory-constrained machines is also relevant given the current state of quantum hardware development.

It has been proven \cite{W03} that, for any space-constructible function $S(n) \in \Omega(\log n)$, any language that is recognized with bounded error by a quantum Turing machine (QTM) using $S(n)$ space can also be recognized by
a deterministic Turing machine using only quadratically more space. The quantum advantage becomes apparent when one focuses on the case of constant space. For any string $w$, let $w^R$ denote the reverse of  $w$. We will be considering the following languages.
\begin{equation}\label{eq:eq}
    \textit{EQ}=\Set{\mathtt{a}^n \mathtt{b}^n | n\geq 0}
\end{equation}
\begin{equation}\label{eq:pal}
    \textit{PAL}=\Set{w | w\in\{\mathtt{a},\mathtt{b}\}^*, w=w^R}
\end{equation}

Ambainis and Watrous \cite{AW02} and Remscrim \cite{R20} have proven the following important facts regarding the ways in which two-way quantum finite automata  outperform their classical counterparts: 
\begin{itemize}
    \item There exists a constant-space quantum machine that recognizes   \textit{EQ} in polynomial expected time, whereas no probabilistic Turing machine that uses $o(\log \log n)$ space can recognize any nonregular language in  polynomial time. \cite{AW02,R20,DS90} Stated formally in terms of simultaneous time-space complexity classes, 
    \begin{equation}
        \mathsf{BQTISP}(n^{O(1)},O(1))\not\subseteq \mathsf{BPTISP}(n^{O(1)},o(\log \log n)).\label{eq:awpoly}
    \end{equation}
    \item There exists a constant-space quantum machine that recognizes  \textit{PAL} in  expected time $2^{O(n)}$, whereas no probabilistic Turing machine that uses $o(\log n)$ space can recognize \textit{PAL}, regardless of the allowed time budget. \cite{AW02,DS92} Formally, 
    \[
    \mathsf{BQTISP}(2^{O(n)},O(1))\not\subseteq \mathsf{BPSPACE}(o(\log n)).
    \]
\end{itemize}
The result concerning \textit{PAL} was later complemented by Remscrim, who proved \cite{R21} that no quantum Turing machine running in space $o(\log n)$ and expected time $2^{n^{1-\Omega(1)}}$ can recognize that language with bounded error, establishing  that the class of languages recognizable by small-space quantum machines in polynomial time is properly contained in the class associated with exponential time.

This paper presents  new results that refine the picture described above, by demonstrating the existence of two  time hierarchies of complexity classes associated with sublogarithmic-space quantum computation and proving a  statement stronger than Equation \ref{eq:awpoly} about the power of polynomial-time constant-space quantum machines. The rest of the paper is structured as follows: After the relevant technical introduction in Section \ref{sec:defs}, Section \ref{sec:hier1} considers the ``upper end'' of the spectrum of intermediate time bounds that lie strictly between polynomial and exponential time, and concludes  (Theorem \ref{thm:hier1}) that, %following result:
%\paragraph{Theorem \ref{thm:hier1}.}
   % \textit{For 
    for all $i \geq 0$, 
    \[
    \mathsf{BQTISP}(2^{O(n^{1/2^i})},o(\log n))\supsetneq \mathsf{BQTISP}(2^{O(n^{1/2^{i+1}})},o(\log n)).
\]

In Section \ref{sec:hier2}, we focus on the lower end of the intermediate regime mentioned above, and present a quasi-polynomial time hierarchy for quantum computation in $o(\log \log n)$ space. Specifically, we prove Theorem \ref{thm:hier2}, which states
that,
%\paragraph{Theorem \ref{thm:hier2}.}
    %\textit{For
    for all $i \geq 1$, 
\begin{equation}
    \mathsf{BQTISP}(2^{O((\log n)^i)},o(\log \log n))\subsetneq \mathsf{BQTISP}(2^{O((\log n)^{i+1})},o(\log \log n)).\label{eq:h2}
\end{equation}

A well-known theorem of Freivalds  \cite{F81} shows the existence of a \textit{classical} constant-space  algorithm that recognizes \textit{EQ} in $2^{O(n)}$ time. Whether every language recognizable in polynomial time by some constant-space quantum algorithm can also be recognized by a two-way probabilistic finite automaton in exponential time  has been an open question until now. We use a member of the family of languages defined to establish the hierarchy of Theorem \ref{thm:hier2} (Equation \ref{eq:h2}) to prove Theorem \ref{thm:adv}, which answers that question negatively, demonstrating that
\[
\mathsf{BQTISP}(n^{O(1)},O(1))\not\subseteq\mathsf{BPSPACE}(o(\log \log n)).
\] % Nobody has done this before, right?RIGHT?? çeto infinite dihedral groupu da Freivalds ile yapabileceğimi söyledi
Section \ref{sec:conc} includes a discussion of time-space tradeoffs implied by our results, and an open question. 

\section{Definitions and background}\label{sec:defs}

We will use classical and quantum versions of the Turing machine model to facilitate the measurement and comparison of the memory requirements of the problems to be considered in this paper. Both classical and quantum Turing machines are defined to have a read-only input tape that they access with a two-way head.  The transition function of every such machine is restricted so that the input head never attempts to move off the tape area containing the string $\lend w \rend$, where $\lend$ and $\rend$ are special endmarker symbols, and  $w$ is the input string. Work tapes are used to measure memory consumption in both models, as will be detailed below.

\begin{definition}\label{def:ptm}
A \emph{probabilistic Turing machine (PTM)}  is a 7\=/tuple  $\paren*{Q,\Sigma,K,\delta,q_0,q_{acc},q_{rej}}$, where
\begin{itemize}
    \item $Q$ is the finite set of states,
    \item $\Sigma$ is the finite input alphabet, not containing  $\lend$ and $\rend$, 
    \item $K$ is the finite work alphabet, including the blank symbol \blanksymb,
    \item %\utkanadd{UTKAN: Set-difference sembolümüz $\setminus$ değil de $-$ olsun mu istiyorsunuz?} 
    $\delta: \paren*{Q \setminus \Set{q_{acc},q_{rej}}} \times \paren*{\Sigma \cup \Set{\lend, \rend}} \times K  \to \mathcal{P}(Q \times K \times \Set{-1,0,1}^2)$, such that, for each fixed $q$, $\sigma$, and $\kappa$, %SUM of $\delta(\sC,\sigma,\sC',d)$ over all $\sC'$ and $d$ \utkanadd{UTKAN: SUM niye büyük? Bunu mu istediniz: 
    $\lvert \delta(q,\sigma,\kappa)\rvert \in \Set{1,2}$, is the transition function,
    \item $q_0 \in Q$ is the start state, and
    \item $q_{acc}, q_{rej} \in Q$, such that $q_{acc}\neq q_{rej}$, are the accept and reject states, respectively.
\end{itemize}
\end{definition}

A PTM has a single read\-/write work tape, which is initially blank. The computation of a  PTM $M$ on input $w \in \Sigma^*$ begins with the input head positioned on the first symbol of   the string $\lend w \rend$, and the machine in state $q_0$. If $M$ is in state $q_{acc}$ ($q_{rej}$), it halts and accepts (rejects). If $M$ is in some state $q  \notin \Set{q_{acc},q_{rej}}$ with the input head scanning some symbol $\sigma$ at the \ith{i} tape position and the work head scanning some symbol $\kappa$ at the \ith{j} tape position,  a member 
$(q',\kappa',d_I,d_W)$ of $\delta(q,\sigma,\kappa)$ is selected with probability $1/\lvert \delta(q,\sigma,\kappa)\rvert$, and $M$  switches to state $q'$, sets the scanned work symbol to $\kappa'$, and locates the input and work heads at the \ith{(i+d_I)} and \ith{(j+d_W)} positions, respectively.
We assume that $\delta$ is defined such that the work head never moves to the left of its initial position.

A PTM is said to \emph{run within space} $S(n)$ if, on any input of length $n$, at most $S(n)$ work tape cells are scanned throughout the computation.  A computation performed by a machine that runs within $O(1)$ space can also be performed by a machine which does not use its work tape at all, and encodes all possible memory contents in its state set. This constant\-/space model is known as the \emph{two\-/way probabilistic finite automaton (2PFA)}.

A (classical or quantum) Turing machine $M$ is said to \textit{recognize} a language $L$ with error bound $\varepsilon$ if there exists a number $\varepsilon<\frac{1}{2}$ such that, for every input string $w\in \Sigma^*$,
\begin{itemize}
    \item if $w\in L$, $M$ accepts $w$ with probability at least $1-\varepsilon$, and
    \item if $w\notin L$, $M$ accepts $w$ with probability at most $\varepsilon$.
\end{itemize}

 $\mathsf{BPTISP}(T(n),S(n))$ denotes the class of languages recognizable with bounded error by PTM's running within space $S(n)$ and in expected time $T(n)$. When only the space complexity is considered, without regard to runtime, we use class names of the form $\mathsf{BPSPACE}(S(n))$.

A  deterministic Turing machine is   simply a PTM whose transition function is restricted so that  %SUM of $\delta(\sC,\sigma,\sC',d)$ over all $\sC'$ and $d$ \utkanadd{UTKAN: SUM niye büyük? Bunu mu istediniz: 
    $\lvert \delta(q,\sigma,\kappa)\rvert =1$ for all $q$, $\sigma$ and $\kappa$.\footnote{Restricting a deterministic Turing machine so that it never uses its work tape yields the constant-space model known as the \textit{two-way deterministic finite automaton (2DFA)}.}
A quantum Turing machine  
\cite{W03} can be viewed as a deterministic Turing machine augmented by adding a finite quantum register, a second, quantum work tape, each cell of which holds a qubit, and a classical  measurement register of fixed size, which will be used to store the  outcome of the latest measurement of the quantum part. As we will see, 
the randomness in the behavior of a QTM is a result of the operations performed on the quantum part of the machine at intermediate steps.

\begin{definition}\label{def:qtm}
A \emph{quantum Turing machine (QTM)}  is an 11-tuple  $\paren*{Q,\Sigma,K,T,k,\delta_Q,\delta_C,q_0,\tau_0,q_{acc},q_{rej}}$, where
\begin{itemize}
    \item $Q$ is the finite set of classical states,
    \item $\Sigma$ is the finite input alphabet, not containing  $\lend$ and $\rend$, 
    \item $K$ is the finite  work alphabet, including the blank symbol \blanksymb,
    \item $T$ is the finite set of possible values of the measurement register,
    \item $k$, a positive integer, is the number of qubits in the quantum register,
    \item $\delta_Q: \paren*{Q \setminus \Set{q_{acc},q_{rej}}} \times \paren*{\Sigma \cup \Set{\lend, \rend}} \times K \to \Delta$, where $\Delta$ is the set of all selective quantum operations \cite{W03} that have operator-sum representations consisting of matrices of algebraic numbers, is the quantum transition function, %\{\mathcal{E}_q \mid q \in Q\setminus \Set{q_{acc},q_{rej}}\}$ is the ,!!!!!!!!!!!BUNU 2QCFA GİBİ İNPUTA DA BAĞIMLI HALE GETİRİP FOOTNOTE'DA w03 BUNU İKİ KAT ADIMDA SİMÜLE EDER DE!!!!!!!!!!!!
    \item $\delta_C: \paren*{Q \setminus \Set{q_{acc},q_{rej}}} \times \paren*{\Sigma \cup \Set{\lend, \rend}} \times K \times T  \to Q \times K \times \Set{-1,0,1}^3$ is the classical transition function, which determines the updates to be performed on the classical components of the machine, 
    \item $q_0 \in Q$ is the initial   state,
    \item $\tau_0 \in T$ is the initial value of the measurement register, and
    \item $q_{acc}, q_{rej}\in Q$, such that $q_{acc} \neq q_{rej}$, are the accept and reject states, respectively.
\end{itemize}
\end{definition}

At the start of  computation,  a QTM $M$ is in state $q_0$, the measurement register contains $\tau_0$, the input head  is located on the left endmarker $\lend$,  the classical work tape contains all blanks, and all qubits of the finite quantum register and the quantum work tape are in their zero states.  Every computational step of $M$ consists of a \textit{quantum transition}, which operates on the quantum part of the machine and updates the measurement register, and  a \textit{classical transition}, which updates the machine state, classical work tape, and the three head positions, based on the presently scanned input and classical work tape symbols, and the value in the measurement register: If $M$ is in some non-halting state $q$, with the input head scanning the symbol $\sigma$ and the classical work tape head scanning the symbol $\kappa$, the selective quantum operation 
$\delta_Q(q,\sigma,\kappa)$
%$\mathcal{E}_q$
acts on the quantum register and the qubit presently scanned by the quantum work tape head.\footnote{In Watrous' original definition \cite{W03} of this QTM model, the quantum transitions depend only on the current state $q$ of the machine. The version we use here is equivalent in power (each step of our machines can be simulated in two steps by Watrous' machines), and is preferred because its specialization to the formal definition of the 2QCFA model \cite{AW02} (to be introduced shortly) is trivial.} This action may be a unitary transformation or a measurement,\footnote{Other nonunitary quantum operations are also allowed by the formalism, but the concrete quantum algorithms to be presented in this paper only need unitary transformations and projective measurements. A detailed exposition of selective quantum operations is presented in \cite{W03}.} resulting in those $k+1$ qubits evolving, and some output $\tau$ being stored in the measurement register with an associated probability. The ensuing classical transition  $\delta_C(q,\sigma,\kappa, \tau)=(q',\kappa',d_I,d_W,d_Q)$ updates the machine state to $q'$, changes the presently scanned classical work tape symbol to $\kappa'$, and moves the input, classical work and quantum work heads in the directions indicated by the increments $d_I$, $d_W$, and $d_Q$, respectively.
%begins with the action  $\delta_q(s,\sigma)$ determined by the current classical state $s$ and the currently scanned tape symbol $\sigma$ being performed on the quantum register. Each such action is either a unitary transformation or a projective measurement of the quantum state, resulting in some outcome $\tau$ with an associated probability. The second and final stage of the computational step is a classical transition that determines the pair $(s',d)$ of the resulting classical state and head movement direction of this step: If  $\delta_q(s,\sigma)$ was a unitary transformation, this pair is simply  $\delta_c(s,\sigma)$, depending again only on $s$ and $\sigma$. If, on the other hand, $\delta_q(s,\sigma)$ was a measurement with outcome $\tau$,   the pair $(s',d)$ is  $\delta_c(s,\sigma,\tau)$, which also depends on that outcome. 
Computation halts with acceptance (rejection) when $M$ enters the  state $q_{acc}$ ($q_{rej}$).

As with PTM's, we focus on QTM's that never move either work tape head to the left of its initial position.
A QTM is said to \textit{run within space} $S(n)$ if, on any input of length $n$, at most $S(n)$ cells are scanned on both the classical and quantum work tapes during the computation.

Analogously to the classical version, $\mathsf{BQTISP}(T(n),S(n))$ denotes the class of languages recognizable with bounded error by QTM's running within space $S(n)$ and in expected time $T(n)$.

\begin{definition}\label{def:dissimilar}
\cite{DS90} For any language $L$ and number $n>0$, two strings $w$ and $w'$ are said to be  \emph{$n$\-/dissimilar}, written $w \not\sim_{L,n} w'$, if $\abs{w} \leq n $, $\abs{w'} \leq n$, and there exists a \emph{distinguishing string} $v$ with $\abs{wv} \leq n$, $\abs{w'v} \leq n$, and $wv \in L \iff w'v \notin L$. Let $D_{L}(n)$ be the maximum $d$ such that there exist $d$ distinct strings that are pairwise $\not\sim_{L,n}$.    
\end{definition}
 The following result, which  links the ``difficulty'' measure $D_{L}(n)$ defined above to the simultaneous time-space complexity of a QTM that recognizes the associated language $L$, is due to Remscrim. \cite{R21} This theorem will provide  an important tool in our proofs in the following sections.

\begin{theorem}\label{thm:remscrimQTMtradeoff}
Suppose \( L \in \mathsf{BQTISP}(T(n),S(n)) \), and suppose further that \( S(n) = o(\log \log D_L(n)) \). Then there exists a real number $ b_0 >0$ such that
\[
T(n) = \Omega\big(2^{-b_0 S(n)} D_L(n)^{2^{-b_0 S(n)}}\big).
\]
\end{theorem}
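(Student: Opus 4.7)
The plan is to adapt the classical crossing-sequence paradigm of Dwork--Stockmeyer to the quantum setting, along the lines of Ambainis--Watrous's analysis of 2QCFAs. Fix $n$ and let $w_1,\ldots,w_D$ be $D=D_L(n)$ pairwise $n$-dissimilar strings guaranteed by \Cref{def:dissimilar}. For each $w_i$ I would associate a \emph{transfer object} $\Phi(w_i)$ that encodes everything the right-hand portion of the input can ever learn about $w_i$: concretely, $\Phi(w_i)$ is the completely positive map on ``boundary data'' that records, for each pair $(c,c')$ of configurations of the machine sitting on the rightmost cell of $w_i$, the sub-amplitude contribution (or mixed-state transition) that the machine reenters the right half in configuration $c'$ given that it last entered in configuration $c$, summed over all intervening trajectories confined to $\lend w_i$. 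A boundary configuration carries only the classical state, the classical work-tape contents and head position, and the quantum register together with the scanned qubit of the quantum work tape. By the space bound, the number of such classical profiles is $2^{O(S(n))}$ and the associated Hilbert space also has dimension $2^{O(S(n))}$, so $\Phi(w_i)$ lives in a complex vector space of dimension $d = 2^{O(S(n))}$.

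Next I would establish the key bilinear decomposition: for any suffix $v$ there is an analogous right-hand object $\Psi_v$, and the acceptance probability of $M$ on $w_iv$ equals a bilinear pairing $\aparen{\Phi(w_i),\Psi_v}$. Because $w_i \not\sim_{L,n} w_j$, there is some $v$ with $|w_iv|,|w_jv|\le n$ for which the two acceptance probabilities differ by at least $1-2\verr = \Omega(1)$. This forces $\Phi(w_i) - \Phi(w_j)$ to have mass at least some $\delta>0$ in the norm dual to the family of realizable $\Psi_v$, and the quantitative form of this separation is what will later play the role of ``resolution'' in a packing bound.

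With separation and ambient dimension in hand, the argument finishes by a volumetric packing estimate. The total-probability constraint on $M$'s computations confines every $\Phi(w_i)$ to a ball of radius $R = \mathrm{poly}(T(n))$ inside the $d$-dimensional space, the polynomial factor coming from bookkeeping over the possible number of boundary crossings before $M$ halts. A standard Euclidean packing bound then yields $D \le (R/\delta)^{d}$, so that $\log D \le d \cdot O(\log T(n)) = 2^{O(S(n))}\log T(n)$. Rearranging and using the hypothesis $S(n) = o(\log \log D_L(n))$ to guarantee that the $\log D$ term dominates the $2^{O(S(n))}$ prefactor, I obtain $T(n) = \OHMEGA{2^{-b_0 S(n)}\, D_L(n)^{2^{-b_0 S(n)}}}$ for a suitable constant $b_0>0$ depending on the machine's alphabets and qubit count.

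The main obstacle, as I anticipate it, is the quantitative separation claim in the second paragraph. In the purely probabilistic case the analogous inequality is essentially convex geometry, but for QTMs one must contend with the fact that intermediate measurements convert pure amplitudes into mixtures, so the ``product'' of left and right transfer objects is a composition of CP maps rather than a plain matrix-vector pairing. Controlling the operator norms of $\Phi(w_i)$ so that $R$ grows only polynomially in $T(n)$ (and not exponentially), and choosing the dual norm so that $\delta$ is bounded below by a quantity of order $1/\POLY{T(n)}$, is the delicate piece; once those estimates are in place, the rest is routine counting of classical and quantum configuration-space dimensions.
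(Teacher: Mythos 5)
You should first note a mismatch of expectations: the paper does not prove Theorem~\ref{thm:remscrimQTMtradeoff} at all. It is imported verbatim from Remscrim~\cite{R21} and used as a black box in the proofs of Theorems~\ref{thm:hier1} and~\ref{thm:hier2}, so there is no in-paper argument to compare yours against; the relevant benchmark is Remscrim's own proof, whose overall architecture (transfer superoperators of dimension $2^{O(S(n))}$ attached to prefixes, a separation forced by $n$-dissimilarity together with the error gap, and a packing/counting bound giving $\log D_L(n) \le 2^{O(S(n))}\,\log T(n)$) your sketch does echo at a high level.

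The genuine gap is the step you call the ``key bilinear decomposition.'' For a two-way machine the acceptance probability on $w_i v$ is \emph{not} a bilinear pairing $\langle \Phi(w_i),\Psi_v\rangle$ of single-crossing transfer objects: the head may cross the boundary between $w_i$ and $v$ arbitrarily many times, so the acceptance probability is a series of alternating compositions of the left and right CP maps, with no a priori bound on the number of terms --- only the expected running time controls the number of crossings, and only in expectation. Consequently the separation $\delta$ is not an intrinsic property of $\Phi(w_i)-\Phi(w_j)$ in some fixed dual norm; it has to be manufactured by a truncation-plus-perturbation (hybrid) argument: cut off computations after $O(T(n))$ steps using Markov's inequality (spending part of the $1-2\varepsilon$ gap), and show that if the two transfer maps were within $\delta$ in a suitable norm, then after at most $O(T(n))$ crossings the acceptance probabilities could differ by only $O(T(n)\,\delta)$, forcing $\delta=\Omega\big((1-2\varepsilon)/T(n)\big)$. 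This Lipschitz control of long compositions of selective quantum operations is precisely the quantum analogue of the Dwork--Stockmeyer lemma, it is the heart of Remscrim's proof, and your sketch explicitly defers it rather than supplying it. Two smaller corrections: with the natural normalization the transfer objects are (sub)normalized CP maps, so the packing radius is $O(1)$ and it is $\delta$, not $R$, that carries the $1/\mathrm{poly}(T(n))$ dependence; and the hypothesis $S(n)=o(\log\log D_L(n))$ serves to make the exponent $2^{-b_0S(n)}\log D_L(n)$ grow (i.e.\ to make the bound meaningful), not to let ``the $\log D$ term dominate the prefactor'' in the final algebra. Once the perturbation lemma is in place, your counting does yield $D_L(n)\le \big(O(T(n)/\delta)\big)^{2^{O(S(n))}}$ and hence the stated bound, but as written the central estimate is assumed, not proven.
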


The QTM's that will be constructed to prove our main results are constant\-/space machines. Restricting the QTM model of Definition \ref{def:qtm} so that it never uses the work tapes yields the well\-/studied \textit{two-way finite automaton with  quantum and classical states (2QCFA)}, due to Ambainis and Watrous~\cite{AW02}. We will present our algorithms in Sections \ref{sec:hier1} and \ref{sec:hier2} as 2QCFA's, with no mention of either work tape.\footnote{The original 2QCFA definition in \cite{AW02} does not restrict the machines to use only algebraic numbers in the matrices specifying the quantum transitions. Our algorithms will conform to Definition \ref{def:qtm}, which requires all transition amplitudes to be algebraic numbers.} %\footnote{Remscrim~\cite{R21} provides a detailed explanation of how 2QCFA's are generalized to QTM's with larger space bounds.}   The following definition, which can be viewed as a \tdfa{} augmented with a quantum register of constant size, is somewhat more involved than \Cref{def:oncekisubsectiondakidef}, because of the way it breaks down each computational step into two stages for evolving the quantum state and the remaining classical part of the machine separately:

%The transition amplitudes of the 2qcfa's we construct are restricted to be in the set !!!!BUNU İLERİDE Bİ DİPNOTTA (BU AMPLITUDELARLA KUADRATİK TİME BİLE VAR AMA KÜBİK SANMIŞLAR) DİYE VEREBİLİRİM!!!!!  $\widetilde{\mathbb{C}}=\overline{\mathbb{Q}} \cup \{\,e^{\pi i r} \colon r \in ( \overline{\mathbb{Q}} \cap \mathbb{R})\}$, where $\overline{\mathbb{Q}}$ denotes the set of algebraic numbers, although the impossibility results for both classical and quantum machines (i.e. those involving the hierarchy) are still valid \cite{DS92} when uncomputable numbers are allowed as transition probabilities or amplitudes.

%DEFINITION OF $\mathsf{BQTISP}$

%PEKİ YA BQTISP*?

%REMSCRIM TRADEOFF TEOREMİ BURADA VERMEKTE YARAR VAR!!!!!!!!!!

%DEFİNİTİON OF NEGATIVE ONE-SIDED BOUNDED ERROR

%!!!!!!!!QTM'lerdeki İKİ FAZLI SİSTEMİN SUBROUTINE OLARAK KULLANMAMIZI KOLAYLAŞTIRMASI, KLASİK KAFA YERİNİ BULURKEN KUANTUMUN BEKLEMESİ FALAN BİR YERDE ANLATILMALI, BU NEDENLE O İKİ ALGORİTMANIN İÇİNİ ANLATMAMIZA HİÇ GEREK YOK DENİLMELİ:))) BURAYA O ŞEKİLDE GİRİŞ YAPILMALI.

Our constructions in Sections \ref{sec:hier1} and \ref{sec:hier2} involve various subroutines that are obtained by adapting two basic 2QCFA algorithms. The first of these recognizes the language \orderedeq{} in polynomial expected time, and was discovered by Remscrim \cite{R20}.\footnote{Ambainis and Watrous were the first to present a polynomial-time 2QCFA for \orderedeq{} in \cite{AW02}, but their algorithm uses some amplitudes  that are not restricted to being algebraic, and thus does not fit our QTM definition.}
%!!!$O(n^4)$-time  Later, Remscrim  discovered an alternative algorithm that recognizes \orderedeq{} in cubic time. BUT AMPLİTUDES!!!$\widetilde{\mathbb{C}}=\overline{\mathbb{Q}} \cup \{\,e^{\pi i r} \colon r \in ( \overline{\mathbb{Q}} \cap \mathbb{R})\}$,}
This machine, which we will call $M_{EQ}$, operates with \textit{negative one-sided bounded error}, i.e. it accepts members of \orderedeq{} with probability 1, and can be ``tuned'' so that it  rejects nonmembers with probability at least $1-\varepsilon$, for any desired small positive value of $\varepsilon$. We will adapt $M_{EQ}$ as the basis of two  subroutines  named \textsc{SameLength} and \textsc{TwiceAsLong}, which are employed in our algorithms in Sections \ref{sec:hier1} and \ref{sec:hier2} to compare the lengths of selected  subsequences of the input with each other. 

The second basic 2QCFA that we shall employ was  discovered by Ambainis and Watrous. \cite{AW02} This algorithm, named $M_{PAL}$, recognizes the language
\textit{PAL} with negative one-sided bounded error, where the error bound $\varepsilon$, i.e. the maximum allowed probability with which a nonmember of $PAL$ can be accepted, can be tuned down to any desired positive value, as described above for $M_{EQ}$. Unlike $M_{EQ}$, the expected runtime of $M_{PAL}$ on inputs of length $n$ is $2^{\Theta(n)}$, where the constant ``hidden'' in the big-$\Theta$ notation depends on $\varepsilon$. Remscrim \cite{R21} has proven that this runtime  is asymptotically optimal for constant-space machines, and we will be careful in Sections \ref{sec:hier1} and \ref{sec:hier2}  to call the subroutine \textsc{Pal} that performs this palindromeness check only on very short prefixes of the input to ensure that the expected runtimes of our algorithms remain subexponential.

The reader should note that our  results depend only on the \textit{existence} of  2QCFA's that recognize \textit{EQ} and \textit{PAL} with negative one-sided bounded error within the runtimes specified above, and not on the details of the ``inner workings'' of the associated machines. For instance, an alternative 2QCFA presented by Yakary{\i}lmaz and Say \cite{YS1001} also satisfies those conditions for \textit{PAL}, and can be used   in the role of $M_{PAL}$ in our algorithms without affecting the results.

%YUKARIYI KAPSAYAN BİR "USEFUL 2QCFA SUBROUTINES" SECTIONU? LAFLA ANLATIR VE TEKNİK DETAY İÇİN APPENDIXE GÖNDERİR???!?!??!? AŞAĞIDAKİLER DE ALTTAKİ SECTIONUN BAŞINA???!?!?

%!!!!!!!!!HER ŞEY \# YERİNE a KONULARAK BASİTLEŞİYOR!!!!!!!!!!!!!!!!!!ve de $\Sigma_{ab\#}=\{a,b,\#\}$

%The following alphabet names will be used in the rest of the paper:
%\begin{itemize}
 %        \item $\Sigma=\{\mathtt{a},\mathtt{b},0,1,\$\}$
  %       \item $\Sigma'=\{\mathtt{a},\mathtt{b},1,\$\}$
   % \item $\Sigma_{\mathtt{ab}}=\{\mathtt{a},\mathtt{b}\}$ 
%\end{itemize}

\section{A time hierarchy for computation with $o(\log n)$ space}\label{sec:hier1}

%+'nın ne demek olduğu da yazılsın MI??????????????????!!!!!!!!!!!!!!!!

We start by defining an infinite family of languages on the alphabet $\Sigma=\{\mathtt{a},\mathtt{b},1,\$\}$. In the following, $\Sigma_{\mathtt{ab}}=\{\mathtt{a},\mathtt{b}\}$. 
%For every $i\geq1$, let bin$(i)$ denote the shortest string  that is the binary representation of the integer $i$. 
%\footnote{The superscript `+'  denotes the Kleene plus operation.} 
%Substrings of this form will appear as delimiters between different segments of the members of the languages that will be defined below.

 Let $RL_0=\Sigma_{\mathtt{ab}}\Sigma_{\mathtt{ab}}^+$, and 
 define, for each $i\geq1$,
\begin{equation}
%RL_i = \{w\$\text{bin}(i)(\mathtt{a}^{|w|}\text{bin}(i))^{|w|-1} \mid w\in RL_{i-1}\}. 
RL_i = \{w\$1(\mathtt{a}^{|w|}1)^{|w|-1} \mid w\in RL_{i-1}\}. 
\label[expression]{eqn:rldef} 
\end{equation}

Let us  examine the pattern 
imposed by the recursive definition above on some member, say, $s$, of $RL_i$.
$s$ consists of $i+1$ ``blocks'', separated from each other with the symbol $\$$.  
For any $j\in \{0,\dots , i\}$, let $s_j$ denote the unique prefix of $s$ that is a member of $RL_j$, i.e., $s_i=s$, and $s_0$, the leftmost block, is a string of length at least 2
on the alphabet $\Sigma_{\mathtt{ab}}$. 
For each  $j\geq 1$,  $s_j$'s length is related to the length of $s_{j-1}$ as
\begin{equation}
%    |s_j|=|s_{j-1}|^2+|s_{j-1}|\lceil \log(j+1)\rceil+1,
    |s_j|=|s_{j-1}|^2+|s_{j-1}|+1,
\label[expression]{eqn:srel}    
\end{equation}
because %$|$bin$(j)|=\lceil \log(j+1)\rceil$ for any $j$, and 
$s_j$  contains precisely $|s_{j-1}|$ occurrences of the %substring bin$(j)$ 
symbol 1
that delimit $|s_{j-1}|-1$ ``segments'' of the form $\mathtt{a}^{|s_{j-1}|}$ in its rightmost block (that follows its rightmost $\$$).

\begin{lemma}\label{rootn}
  For any $i>0$, there exists a constant $a_i>0$ such that the following relations hold between the length $n$ of any string  $s\in RL_i$ and the length of the leftmost block $s_0$ of $s$:
  \begin{enumerate}
      \item $|s_0|<n^{1/2^i}$.
      \item  $|s_0|>a_i n^{1/2^i}$.
  \end{enumerate}
\end{lemma}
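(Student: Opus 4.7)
The plan is to prove both inequalities by induction on $i$, using only the length recurrence $|s_j| = |s_{j-1}|^2 + |s_{j-1}| + 1$ from \Cref{eqn:srel}, together with the fact that every member of $RL_0$ has length at least $2$ (so $|s_0| \geq 2$). Both parts reduce to comparing the iterated recurrence $f(x) = x^2 + x + 1$ with a pure squaring map $x \mapsto x^2$, sandwiched between constant multiplicative factors.

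For part~1, I would observe that $f(x) > x^2$ whenever $x \geq 1$, so $|s_j| > |s_{j-1}|^2$ for every $j \geq 1$. A straightforward induction then gives $|s_i| > |s_0|^{2^i}$ (the base case $i=1$ is immediate, and the inductive step uses $|s_i| > |s_{i-1}|^2 > (|s_0|^{2^{i-1}})^2 = |s_0|^{2^i}$). Taking $2^i$-th roots yields $|s_0| < n^{1/2^i}$ with $n = |s_i|$.

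For part~2, I would show a matching upper bound on $|s_i|$ of the shape $|s_i| \leq C_i |s_0|^{2^i}$ for some explicit constant $C_i$. The key estimate is that when $x \geq 2$,
\[
f(x) = x^2 + x + 1 \leq x^2\paren*{1 + \tfrac{1}{x} + \tfrac{1}{x^2}} \leq \tfrac{7}{4}\, x^2.
\]
Since $|s_{j-1}| \geq |s_0| \geq 2$ for all $j \geq 1$, iterating this inequality $i$ times gives
\[
|s_i| \leq \paren*{\tfrac{7}{4}}^{1 + 2 + 4 + \cdots + 2^{i-1}} |s_0|^{2^i} = \paren*{\tfrac{7}{4}}^{2^i - 1} |s_0|^{2^i}.
\]
Rearranging and setting $a_i = (4/7)^{(2^i-1)/2^i}$ (which is a positive constant depending only on $i$, and in fact bounded below by $4/7$) produces $|s_0| \geq a_i\, n^{1/2^i}$.

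There is no real obstacle in this argument beyond keeping track of the constants cleanly; the only subtlety is making sure the bound $f(x)/x^2 \leq 7/4$ is applied only where it is valid, which is guaranteed by the hypothesis $|s_0| \geq 2$ built into the definition of $RL_0$. If a sharper constant were desired one could refine $7/4$ to $1 + o(1)$ by noting that $|s_{j-1}|$ grows doubly exponentially in $j$, but the crude bound already suffices for the statement.
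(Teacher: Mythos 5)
Your proof is correct and follows essentially the same route as the paper's: sandwich the recurrence via $|s_{j-1}|^2 < |s_j| \le c\,|s_{j-1}|^2$ and iterate, with your explicit $c=7/4$ playing the role of the paper's unspecified constant and your $a_i=(4/7)^{(2^i-1)/2^i}$ matching the paper's $a_i = 1/c^{1-1/2^i}$. The only cosmetic point is that you conclude the non-strict bound $|s_0|\ge a_i n^{1/2^i}$ (equality can actually occur, e.g.\ $i=1$, $|s_0|=2$, $n=7$), whereas the lemma asserts a strict inequality; shrinking $a_i$ slightly (or noting the estimate $f(x)\le \tfrac74 x^2$ is strict for all $x>2$, hence for every $j\ge 2$) restores it.
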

\begin{proof}
%It is useful to express the length of the leftmost block $s_0$ in terms of the length $n$ of the overall string $s_i$. 
We use Equation \ref{eqn:srel}.
Since $|s_j|>|s_{j-1}|^2$ for all $j$, we have  $n=|s_i|>|s_0|^{2^i}$, yielding the upper bound $|s_0|<n^{1/2^i}$.

To obtain a lower bound, note that 
%$\lceil \log(j+1)\rceil <|s_{j-1}|$  for all $j>0$. Therefore,
 there exists a constant $c>0$  such that  $|s_j|<c|s_{j-1}|^2$ for all $j$.
 Since it is easy to prove that  $|s_j|<c^{2^{j}-1}|s_0|^{2^j}$ for all $j$,
 we conclude that $|s_0|>a_i n^{1/2^i}$, where $a_i=\frac{1}{c^{1-\frac{1}{2^{i}}}}$.
\end{proof}
%j j+1  log(j+1)   sj-1        sj

%1   2   1     atleast2  at least 7

%2   3   2      atleast7  at least 26

%3   4    2     atleast26

%$|w|^2-|w|$ occurrences of the symbol  $\#$

We now define a new language family. For each $i\geq 1$, let
\begin{equation}RPAL_i = \{s \mid s\in w\$\Sigma^* \cap RL_{i}, w\in PAL\}. 
\label[expression]{eqn:rpaldef} 
\end{equation} 

 The only difference between $RL_i$ and $RPAL_i$ is that Equation \ref{eqn:rpaldef} imposes the additional condition that the leftmost block of each member of the language should be a palindrome. Note that $RPAL_i$ can be seen as a ``padded'' version of the language $PAL \cap \Sigma_{\mathtt{ab}}\Sigma_{\mathtt{ab}}^+$, where the padding has a specific internal syntax. 
As we will demonstrate shortly,  a 2QCFA can check whether its input is in $RL_i$ in polynomial expected time  with a very low probability of error, and  the time complexity of recognizing $RPAL_i$ is dominated by the remaining necessity to check whether the leftmost block of the input is a palindrome or not.

%!!!!VE ŞİMDİ DE ESAS FAMİLYA RPAL!!!%It will be useful to consider the following second, somewhat ``simpler'' family of languages as well in our analyses:

%Let $RPAL_0=\{p \mid p\in PAL, |p|>1\}$. Each language in the family to be defined will be obtained by padding members of $RPAL_0$ by differing amounts, namely, 
%for each $i\geq1$,
%\begin{equation}RPAL_i = \{w\$\text{bin}(i)(\#^{|w|}\text{bin}(i))^{|w|-1} \mid w\in RPAL_{i-1}\}. 
%\end{equation}

\begin{theorem}\label{thm:mrpal}
        For all $i\geq1$, $RPAL_i \in \mathsf{BQTISP}(2^{O(n^{1/2^{i}})},O(1))$.
\end{theorem}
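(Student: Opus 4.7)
The plan is to construct a 2QCFA $M_{RPAL_i}$ that (i) verifies in polynomial expected time that its input conforms to the nested structural pattern required by $RL_i$, and (ii) verifies that the leftmost block $s_0$ is a palindrome by invoking the \textsc{Pal} subroutine on just that block. By the upper bound in \Cref{rootn}, $|s_0| < n^{1/2^i}$, so \textsc{Pal} on $s_0$ runs in expected time $2^{O(n^{1/2^i})}$. Provided the structural check takes only polynomial expected time, the total expected runtime will be dominated by the palindrome test and will therefore be $2^{O(n^{1/2^i})}$, as required.

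The structural check proceeds in stages. First, the machine scans the input in a single pass to verify syntactically that it contains exactly $i+1$ blocks separated by $\$$, and that the leftmost block consists only of symbols from $\{\mathtt{a},\mathtt{b}\}$ and has length at least $2$. Then, for each $j$ from $1$ to $i$ (a constant number of stages depending only on the fixed index $i$), it verifies that the $j$-th block has the form $1(\mathtt{a}^{|s_{j-1}|}1)^{|s_{j-1}|-1}$. This amounts to checking, for each maximal $\mathtt{a}$-run in the $j$-th block, that its length equals $|s_{j-1}|$, the length of the prefix ending at the $(j-1)$-th $\$$, and that the total number of $1$ symbols within the $j$-th block is exactly $|s_{j-1}|$. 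Each such check can be implemented by invoking the \textsc{SameLength} subroutine on the appropriate pair of substrings of the input, which, being derived from $M_{EQ}$, runs in polynomial expected time with negative one-sided bounded error.

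Because both \textsc{Pal} and \textsc{SameLength} exhibit negative one-sided bounded error, the machine accepts only when every individual check accepts, guaranteeing acceptance probability $1$ on genuine members of $RPAL_i$; for non-members, at least one structural invariant fails, so the corresponding \textsc{SameLength} (or \textsc{Pal}) call rejects with high probability. The main obstacle is managing the combined error across potentially polynomially many \textsc{SameLength} invocations in the inner loops of the structural check while keeping overall time polynomial. This can be addressed either by designing \textsc{SameLength} variants that probe all $\mathtt{a}$-run lengths against $|s_{j-1}|$ within a single amplified invocation, or by randomly sampling a run and a $1$-count comparison with sufficient error amplification; either approach keeps the structural verification polynomial and makes the combined error tunable below any desired constant. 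Since $i$ is a fixed constant, the polynomial-time structural verification is absorbed into the $2^{O(n^{1/2^i})}$ expected runtime of the palindrome check on $s_0$, completing the plan.
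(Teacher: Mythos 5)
Your overall architecture (polynomial-time structural verification of the $RL_i$ pattern via \textsc{SameLength}-style comparisons, followed by \textsc{Pal} on the leftmost block, with the runtime bound coming from Lemma \ref{rootn}) is the same as the paper's, but there is a genuine gap in the runtime analysis. The bound $2^{O(n^{1/2^i})}$ must hold for the \emph{expected} runtime on \emph{every} input, including strings not in $RL_i$ whose leftmost block has length $\Theta(n)$. On such an input, your structural check rejects only with probability $1-\delta$ for some constant $\delta$ (you explicitly aim for ``combined error tunable below any desired constant''), so with constant probability $\delta$ the machine proceeds to run \textsc{Pal} on a block of length $\Theta(n)$, which costs expected time $2^{\Theta(n)}$. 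The expected runtime on that input is then $\delta\cdot 2^{\Theta(n)} = 2^{\Theta(n)}$, destroying the claimed $2^{O(n^{1/2^i})}$ bound. A constant rejection probability for structurally bad inputs is therefore not enough; you need the probability of ever reaching the palindrome stage on a non-$RL_i$ input to be exponentially small in $n$, small enough to cancel the $2^{c_\varepsilon n}$ worst-case cost of \textsc{Pal}.

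This is exactly the ingredient the paper adds: the structural checks are wrapped in a loop that is exited only when a random walk reaches the right endmarker (probability $1/(n+1)$) \emph{and} $k_\varepsilon$ further coin flips all come up heads, so the check is repeated an expected $2^{k_\varepsilon}(n+1)$ times, driving the probability of wrongly passing a non-member of $RL_i$ down to at most $\varepsilon^{2^{k_\varepsilon}(n+1)}$; $k_\varepsilon$ is then chosen so that $\varepsilon^{2^{k_\varepsilon}(n+1)}2^{c_\varepsilon n}\in o(1)$, making the bad-case contribution to the expected runtime negligible while the loop itself stays polynomial. Note also that your stated ``main obstacle'' points in the wrong direction: because \textsc{SameLength} and \textsc{Pal} have negative one-sided error, members of $RPAL_i$ pass every check with probability $1$ no matter how many invocations you make, so accumulating error over polynomially many calls is not the problem; the problem is that a \emph{single} round of checks rejects non-members only with constant probability, and that is what must be amplified before \textsc{Pal} is allowed to run.
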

\begin{proof}
Figure \ref{fig:qalg1} is a template which can be used to construct a 2QCFA $M_{RPAL_i}$ that recognizes $RPAL_i$ for any desired $i>0$. We will show that each $RPAL_i$ has expected runtime $2^{O(n^{1/2^{i}})}$.
In brief, the algorithm's main loop (M) spends polynomial expected time to check whether its input is a member of $RL_i$. This check is performed by running a subroutine on carefully designated subregions on the input tape to compare the lengths of several substring sequences. We will show that,  for input strings that are not in $RL_i$, the probability of the machine reaching stage (P) without rejection is exponentially small. That final stage runs the palindromeness checking algorithm on the leftmost input block, whose length can exceed the bound established in Lemma \ref{rootn} with only that minuscule probability, which will guarantee that the overall expected runtime is subexponential. The parameters $\varepsilon$ (a rational number) and $k_{\varepsilon}$ (a positive integer) are used for tuning the error bound and runtime of the algorithm. We now describe the details.

\begin{figure}[htb!]
    \caption{$M_{RPAL_i}$}
    \label{fig:qalg1} 
    \begin{turing}[(RW)]{V}{\textit{  }}
        \titem[(R)]{If the input is not of the  form %$\Sigma_{\mathtt{ab}}\Sigma_{\mathtt{ab}}^+\$\text{bin}(1)(\mathtt{a}^+\text{bin}(1))^+\$\text{bin}(2)(\mathtt{a}^+\text{bin}(2))^+\cdots\$\text{bin}(i)(\mathtt{a}^+\text{bin}(i))^+$, 
$\Sigma_{\mathtt{ab}}\Sigma_{\mathtt{ab}}^+(\$1(\mathtt{a}^+1)^+)^+$, 
        \reject.}
        \titem[(M)]{
            Repeat ad infinitum:}
        \ttitem{For each integer $j$ from $i$ down to and including 1, do the following:}
        \tttitem{\{Let $p_l$ denote the  position of the left endmarker.\} }
        \tttitem{\{Let $p_m$ denote the position of the $j$th $\$$ symbol from the left.% (!!!!DELETE!!!?!?!!!This position is  easy to recognize, since this is the only occurrence of $\$$  with a bin$(j)$ immediately to its right.)
        \}
        }
        \tttitem{\{Let $p_r$ denote the position of the $(j+1)$st $\$$ symbol (or the right endmarker, if $j=i$).\}}
        %\tttitem[(C1)]{Run \textsc{SameLength}$(\langle \Sigma,(p_l,p_m)\rangle,\langle\{\text{bin}(j)\},(p_m,p_r)\rangle,\varepsilon)$.}
        \tttitem[(C1)]{Run \textsc{SameLength}$(\langle \Sigma,(p_l,p_m)\rangle,\langle\{1\},(p_m,p_r)\rangle,\varepsilon)$.}
        \tttitem{\{Let $p_r$  denote the position of the %leftmost symbol of the %rightmost 
        second occurrence of the symbol 1 %bin$(j)$ 
        after the $j$th $\$$ symbol in the input.\}}
        %\tttitem[(L)]{Repeat ad infinitum:
         \tttitem{While $p_r$ is the position of
        a $1$ symbol, do the following:
            %While the input contains at least one  bin$(j)$ located to the 
            %left of $p_r$,right of $p_l$,
            %do the following:
            }
        \ttttitem[(C2)]{    
            Run \textsc{SameLength}$(\langle\Sigma,(p_l,p_m)\rangle,\langle\{\mathtt{a}\},(p_m,p_r)\rangle,\varepsilon)$.}
       % \ttttitem{If $p_r$ is the position of
       % a $\$$ symbol or the right endmarker, exit loop (L).}
        \ttttitem{    
            \{Update $p_r$ to denote the position of the %leftmost symbol of the nearest bin$(j)$ 
            nearest non-$\mathtt{a}$ symbol on the tape to the right of its present value.\}
            }        
            \ttttitem{
            \{Update $p_m$ to denote the position of the %leftmost symbol of the 
            nearest 
            occurrence of the symbol 1 %bin$(j)$ 
            located to the left of $p_r$.\}} 
        \ttttitem{\{Update $p_l$ to denote the position of the %leftmost symbol of the 
            nearest 
            occurrence of the symbol 1 %bin$(j)$ 
            located to the left of $p_m$.\}}
        \ttitem[(RW)]{Move the tape head to the leftmost input symbol.}
        \ttitem{While the currently scanned symbol is not an endmarker, do the following:}
        \tttitem{Simulate a classical coin flip. If the result is ``heads'', move  one cell to the right. Otherwise, move  one cell left.}
        \ttitem{If the random walk ended at the right endmarker, do the following:}
        \tttitem{Simulate $k_{\varepsilon}$ coin flips. If all results are ``heads'', exit loop (M).}
        \titem[(P)]{
            Move the input head to the left endmarker and run \textsc{Pal}$($\$$,\varepsilon)$.}
      %      While the input contains at least one delimiter symbol located to the right of S1, do the following:}
     %   \ttitem{
     %       (Let S2 be the first delimiter to the right of S1. Let S3 be the first element of D UNION ENDMARKERS appearing to the right of S2).}
   %     \ttitem{
     %       Call NEXT on the sequences of bits delimited by S1-S2 and S2-S3.}
    %    \ttitem{
   %         Update S1 to the next delimiter to the right.
     %       }
   %     \titem{If not yet rejected, \accept.}
    \end{turing}
\end{figure}

The regular expression check (R) at the start of the algorithm can be performed deterministically in linear time. Each iteration of the main loop (M) goes through the integers from $i$ down to 1 to check whether the input satisfies the following conditions imposed by the definition in Equations \ref{eqn:rpaldef} and \ref{eqn:rldef}:  
\begin{enumerate}
    \item For each $j$, every member of $RPAL_j$ should contain precisely $|w|$ occurrences of the % substring bin$(j)$
    symbol 1
    to the right of the 
    $j$th 
    $\$$ symbol %that precedes the leftmost occurrence of  bin$(j)$,
    from the left,
    where $w$ is the prefix of the input up to that $\$$ symbol. This is checked by stage (C1), as will be explained below.
    \item For each $j$, every member of $RPAL_j$ should contain precisely $|w|$ occurrences of the symbol $\mathtt{a}$ sandwiched between every consecutive pair of  %bin$(j)$'s,
    1's following its  $j$th $\$$ symbol, where  $w$ is defined as above.  %bin$(j)$'s,
This is checked by stage (C2).
\end{enumerate}

As mentioned in Section \ref{sec:defs}, we use adapted versions of the polynomial-time 2QCFA algorithm $M_{EQ}$ that recognizes the language $EQ=\Set{\mathtt{a}^n \mathtt{b}^n | n\geq 0}$ to perform the checks described above. Note that, whenever any  2QCFA that recognizes $EQ$ runs on an input of the form $\mathtt{a}^+\mathtt{b}^+$, it can be viewed as reporting whether the lengths of two disjoint subsequences of its input (which  have been placed on the tape in such a manner that it can distinguish where the subsequence on the left ends and the one on the right starts) are equal or not. In the Appendix, we describe how $M_{EQ}$ can be modified to obtain submachines that perform the more general subsequence length comparison tasks indicated by the ``subroutine calls'' of the form \textsc{SameLength}$(\langle S_{\mathbf{L}},I_{\mathbf{L}}\rangle,\langle S_{\mathbf{R}},I_{\mathbf{R}}\rangle,\varepsilon)$ in Figure \ref{fig:qalg1}.

A \textsc{SameLength} submachine conforming to the parameter list $(\langle S_{\mathbf{L}},I_{\mathbf{L}}\rangle,\langle S_{\mathbf{R}},I_{\mathbf{R}}\rangle,\varepsilon)$  runs an appropriately adapted version of $M_{EQ}$ to compare the number of occurrences of members of set $S_{\mathbf{L}}$ within the tape region corresponding to the position interval $I_{\mathbf{L}}$ with the  number of occurrences of members of  $S_{\mathbf{R}}$ within the tape region corresponding to interval $I_{\mathbf{R}}$, so that the error (i.e., the probability that $M_{EQ}$ misclassifies its input when the two subsequences are of unequal length) is bounded by the  number $\varepsilon>0$. The sets $S_{\mathbf{L}}$ and $S_{\mathbf{R}}$ can contain individual members of the input alphabet, or longer substrings. The position intervals $I_{\mathbf{L}}$ and $I_{\mathbf{R}}$ are delimited by 
three ``signposts'', denoted  $p_l$, $p_m$, and $p_r$ in the pseudocode in Figure \ref{fig:qalg1}.\footnote{For example, in this parlance, the original 2QCFA $M_{EQ}$ running on a string of the form $\mathtt{a}^+\mathtt{b}^+$ can be viewed as operating within the two intervals $(p_l,p_m)$ and $[p_m,p_r)$, where $p_l$, $p_m$, and $p_r$ are the positions of the left endmarker, the leftmost $\mathtt{b}$, and the right endmarker, respectively.} As detailed in the Appendix, the submachine realizing the \textsc{SameLength} procedure simulates $M_{EQ}$ on a virtual tape containing the string $\lend\mathtt{a}^j \mathtt{b}^k \rend$, where $j$ (resp. $k$) is the length of the left (resp. right) subsequence indicated by its parameters.

If $M_{EQ}$ would enter its reject state as a result of the length comparison dictated by these parameters, \textsc{SameLength}, and thereby  also $M_{RPAL_i}$, rejects and halts. If, on the other hand, the execution of $M_{EQ}$ is seen to be led to its accept state, \textsc{SameLength} simply returns control to the next instruction in $M_{RPAL_i}$. 
The probability that $M_{RPAL_i}$ rejects its input string $s$ in a single iteration of the main loop (M) is 0 if $s\in RPAL_i$, and at least $1-\varepsilon$ if $s\notin RL_i$.\footnote{The probability that $M_{RPAL_i}$ fails to reject a string $s\notin RL_i$ is maximized if $s$ has a single ``defect'' that can be caught by only one of the many \textsc{SameLength} calls executed in that iteration.}

To reduce the error bound to an exponentially small amount, $M_{RPAL_i}$ repeats the above-mentioned control sequence polynomially many times by executing a ``biased coin throw'' implemented by a random walk  within each iteration of the main loop, and exiting only when the less likely outcome of that coin is observed.  It is well known \cite{DS92,AW02} that a random walk as described in stage (RW) of 
 Figure \ref{fig:qalg1}  ends at the right endmarker with probability $1/(n+1)$ on inputs of length $n$. Considering the probability of obtaining ``heads'' in all the $k_{\varepsilon}$ additional fair coin tosses as well, the expected number of iterations of loop (M) is $2^{k_{\varepsilon}}(n+1)$ (unless a \textsc{SameLength} procedure call causes the machine to reject earlier). We conclude that, when running on an input string $s\notin RL_i$, loop (M) of $M_{RPAL_i}$ will fail to reject, and the machine will  therefore proceed to stage (P), with a  probability  $p_{bad}$ whose expected value is at most $\varepsilon^{2^{k_{\varepsilon}}(n+1)}$, where $n=|s|$. 
 
 Each iteration of (M) consists of a constant number of sweeps of the input, with each sweep comprising  at most $n$ calls of \textsc{SameLength}, each of which run in polynomial time. The random walk  has an expected runtime of $O(n)$. \cite{GS24} We conclude that the total expected runtime of (M) is polynomially bounded.

In its final stage (P), $M_{RPAL_i}$ calls a procedure based on Ambainis and Watrous' $M_{PAL}$ algorithm, which was also introduced in Section \ref{sec:defs}: A procedure call of the form \textsc{Pal}$(D,\varepsilon)$ operates only on the prefix of the input string that precedes the first occurrence of the substring $D$ in the input, feeding only the members of $\Sigma_{\mathtt{ab}}$ that occur in that prefix to $M_{PAL}$, skipping over any other symbols.\footnote{Although $M_{RPAL_i}$ would not encounter any nonmember of  $\Sigma_{\mathtt{ab}}$ in the first block of its input in stage (P), this ability to skip over other symbols will be useful when we employ the \textsc{Pal} procedure in another algorithm in Section \ref{sec:hier2}.} The basic algorithm $M_{PAL}$ is set up so that its error bound is $\varepsilon$. $M_{RPAL_i}$ terminates by acceptance or rejection as dictated by the \textsc{Pal} procedure at the end of stage (P).

Before analyzing the overall runtime of $M_{RPAL_i}$, which will be seen to be dominated by the final stage, we describe how to set the parameters  $\varepsilon$ and $k_{\varepsilon}$. $\varepsilon$ can simply be set to any desired positive rational number that will be the final error bound of  $M_{RPAL_i}$ in its recognition of $RPAL_i$. As mentioned in Section \ref{sec:defs} and shown in \cite{AW02}, there exists a number $c_{\varepsilon}>0$ (which depends on this choice of $\varepsilon$) so that the runtime of $M_{PAL}$ on inputs of sufficiently large  length $n$ is at most $2^{c_{\varepsilon}n}$.  We set $k_{\varepsilon}$ to the smallest positive integer that satisfies the relation $\varepsilon^{2^{k_{\varepsilon}}(n+1)}2^{c_{\varepsilon}n}\in o(1)$.

By Lemma \ref{rootn}, if the input string is in $RL_i$, the length of its first block, which is the ``input'' for the \textsc{Pal} procedure in stage (P), is less than $n^{1/2^i}$. In this case, stage (P) will have an expected runtime of at most $2^{c_{\varepsilon}n^{1/2^i}}$. If the input  is not in $RL_i$, it may be the case that the first block is much longer, and there is a nonzero probability that the execution of \textsc{Pal} on that block will exceed the runtime bound stated above. Fortunately, this probability nears 0 as $n$ grows. Calculating the expected runtime of $M_{RPAL_i}$ in that case, we obtain the upper bound 
\[n^{O(1)}+(1-p_{bad})\times 0+p_{bad}\times 2^{c_{\varepsilon}n}\leq n^{O(1)}+ %(1-p_{bad})2^{c_{\varepsilon}n^{1/2^i}}+ 
\varepsilon^{2^{k_{\varepsilon}}(n+1)}2^{c_{\varepsilon}n},
\]
which yields an overall runtime of $n^{O(1)}$, dominated by stage (M). We conclude that, for all $i\geq 1$, each  $M_{RPAL_i}$ runs in $2^{O(n^{1/2^i})}$ time and recognizes the  language $RPAL_i$ with negative one-sided bounded error $\varepsilon$.% The inputs that are most likely to be accepted erroneously are members of $RL_i$ whose first blocks are not palindromes.
\end{proof}

%For any language $\langL$ and number $n>0$, two strings $w$ and $w'$ are said to be  \emph{$n$\-/dissimilar}, written $w \not\sim_{\langL,n} w'$, if $\abs{w} \leq n $, $\abs{w'} \leq n$, and there exists a \emph{distinguishing string} $v$ with $\abs{wv} \leq n$, $\abs{w'v} \leq n$, and $wv \in \langL \iff w'v \notin \langL$. \cite{DS90} Let $D_{\langL}(n)$ be the maximum $k$ such that there exist $k$ distinct strings that are pairwise $\not\sim_{\langL,n}$. The following important result is due to Remscrim. \cite{R21}

%!!!!!!!!!!!!!!1EĞER   2QCFA İÇİN ÖZEL OLAN EN AZ ŞU KADAR TIME SONUCU LAZIMSA BURADA?!!!!!!!!

%\begin{theorem}\label{thm:remscrim}
%If a 2QCFA recognizes some language $L$ with bounded error in expected
%time at most $T(n)$ on all inputs of length at most $n$, then there is
%a positive real number $\alpha$ (that depends only on the number of
%states of the 2QCFA), such that
 % $T(n)\in \Omega\bigl(D_L(n)^\alpha\bigr)$.
%\end{theorem}

%!!!!!!!!!END OF 2QCFA VERSION!!!!!!!!!!!!!!!!!

We are now ready to state our first hierarchy theorem.

\begin{theorem}\label{thm:hier1}
    For each $i \geq 0$, $\mathsf{BQTISP}(2^{O(n^{1/2^i})},o(\log n))\supsetneq \mathsf{BQTISP}(2^{O(n^{1/2^{i+1}})},o(\log n))$.
\end{theorem}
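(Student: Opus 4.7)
The plan is to first dispatch the easy direction: $\mathsf{BQTISP}(2^{O(n^{1/2^{i+1}})},o(\log n)) \subseteq \mathsf{BQTISP}(2^{O(n^{1/2^i})},o(\log n))$ is immediate because $n^{1/2^{i+1}} = \sqrt{n^{1/2^i}}$. The real content is the strictness of the containment. I would produce a separating language $L_i$ belonging to the larger class but not to the smaller one; the natural candidate is $L_i = RPAL_i$ for $i \geq 1$ (with $L_0 := PAL \cap \Sigma_{\mathtt{ab}}\Sigma_{\mathtt{ab}}^+$ handling the base case via Ambainis and Watrous' $M_{PAL}$). Placement in the larger class comes for free, since \cref{thm:mrpal} already gives $RPAL_i \in \mathsf{BQTISP}(2^{O(n^{1/2^i})}, O(1)) \subseteq \mathsf{BQTISP}(2^{O(n^{1/2^i})}, o(\log n))$.

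All the effort therefore goes into proving $RPAL_i \notin \mathsf{BQTISP}(2^{O(n^{1/2^{i+1}})}, o(\log n))$, which I would derive from \cref{thm:remscrimQTMtradeoff} by lower-bounding $D_{RPAL_i}(n)$. For each length $\ell$, I would take the $2^\ell$ strings in $\{\mathtt{a},\mathtt{b}\}^\ell$ as the dissimilarity pool. To distinguish $u \neq u'$, I would use the suffix $v = u^R p_\ell$, where $p_\ell$ is the canonical padding, depending only on $\ell$ and $i$, that extends any $\Sigma_{\mathtt{ab}}$-string of length $2\ell$ into an element of $RL_i$ with that string as its leftmost block; concretely, $p_\ell$ begins with $\$ 1 (\mathtt{a}^{2\ell} 1)^{2\ell - 1}$ and then iterates the recursion of Equation \ref{eqn:rldef} up through index $i$. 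Both $uv$ and $u'v$ then lie in $RL_i$, but their leftmost blocks are $uu^R$ (always a palindrome) and $u'u^R$ (a palindrome would force $u'u^R = u(u')^R$, contradicting $u \neq u'$), so $uv \in RPAL_i$ and $u'v \notin RPAL_i$. By \cref{rootn}, the common length is $\Theta(\ell^{2^i})$, and so $D_{RPAL_i}(n) \geq 2^{\Omega(n^{1/2^i})}$.

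To close the argument, I would assume for contradiction that $RPAL_i \in \mathsf{BQTISP}(T(n), S(n))$ with $T(n) = 2^{O(n^{1/2^{i+1}})}$ and $S(n) = o(\log n)$. Since $\log\log D_{RPAL_i}(n) = \Omega(\log n)$, the hypothesis $S(n) = o(\log\log D_{RPAL_i}(n))$ of \cref{thm:remscrimQTMtradeoff} holds, and its conclusion gives
\[
T(n) = \Omega\bigl(2^{-b_0 S(n)} D_{RPAL_i}(n)^{2^{-b_0 S(n)}}\bigr).
\]
Substituting $2^{-b_0 S(n)} = n^{-o(1)}$ and the dissimilarity lower bound yields $T(n) = 2^{\Omega(n^{1/2^i - o(1)})} = 2^{\omega(n^{1/2^{i+1}})}$, where the last step uses that the ratio $n^{1/2^i}/n^{1/2^{i+1}} = n^{1/2^{i+1}}$ dominates every $n^{o(1)}$. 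This contradicts $T(n) = 2^{O(n^{1/2^{i+1}})}$.

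The main obstacle I foresee is the dissimilar-family construction in the middle paragraph: I need to verify carefully that $p_\ell$ is genuinely a function of $\ell$ alone, so that a single suffix $v$ serves both $u$ and $u'$, and that using $uu^R$ in place of just $u$ in the leftmost block still produces a total length of the form $\Theta(\ell^{2^i})$ needed to convert the $2^\ell$ count into $2^{\Omega(n^{1/2^i})}$. Everything downstream is routine: the gap between the $n^{1/2^i}$ obtained in the lower bound and the $n^{1/2^{i+1}}$ in the target class is wide enough to absorb the $n^{o(1)}$ slack coming from Remscrim's tradeoff with no delicate calibration.
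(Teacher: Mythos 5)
Your proposal is correct and follows essentially the same route as the paper: place $RPAL_i$ in the larger class via Theorem \ref{thm:mrpal}, lower-bound $D_{RPAL_i}(n)\geq 2^{\Omega(n^{1/2^i})}$ using palindrome halves extended by the canonical, length-determined padding as the distinguishing suffixes, and feed this into Theorem \ref{thm:remscrimQTMtradeoff} to contradict a $2^{O(n^{1/2^{i+1}})}$ expected runtime in $o(\log n)$ space. The only cosmetic difference is the base case $i=0$, where the paper invokes Remscrim's $2^{n^{1-\Omega(1)}}$ lower bound for \textit{PAL} directly while you leave that lower bound implicit; it follows from the same dissimilarity-plus-tradeoff argument with $D_{PAL}(n)=2^{\Theta(n)}$.
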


\begin{proof}
   The \textit{PAL} language witnesses the separation  for $i=0$, since it can be recognized \cite{AW02} by a 2QCFA  in $2^{O(n)}$ time,  and no QTM employing $o(\log n)$ space can recognize it \cite{R21} in $2^{n^{1-\Omega(1)}}$ time.
    
    For each $i>0$, it has already been established (Theorem \ref{thm:mrpal}) that $RPAL_i \in \mathsf{BQTISP}(2^{O(n^{1/2^{i}})},O(1))$. %the language $RPAL_i$ can be recognized in time $2^{O(n^{1/2^i})}$ by a 2QCFA. 
    It remains to show that 
    \[RPAL_i \notin \mathsf{BQTISP}(2^{O(n^{1/2^{i+1}})},o(\log n)).\]

  We consider the measure $D_{RPAL_i}(n)$, based on Definition \ref{def:dissimilar}. %the $n$-dissimilarity relation defined in Section \ref{sec:defs}. 
  For any positive even number $m$, let $n$ be the length of the members of $RPAL_i$ whose leftmost blocks are of  length $m$. By Lemma \ref{rootn}, there exists a constant $a_i>0$ such that $m>a_i n^{1/2^i}$. Let $W_n$ be the set of all strings of length $m/2$ on $\Sigma_{\mathtt{ab}}$.  
  
    Since every distinct pair of strings $w$, $w'$ in $W_n$ are distinguished by the string $w^{R}\$t$, where $t$ is the unique postfix that satisfies $ww^{R}\$t \in RPAL_i$, we conclude that %there exists a constant $d_i>0$ such that
    $D_{RPAL_i}(n)\geq |W_n|\geq 2^{\frac{a_i}{2} n^{1/2^i}}$ for infinitely many values of $n$.

%Assume that $RPAL_i \in BQTISP(2^{O(n^{1/2^{i+1}})},O(1))$. This would entail the existence of a 2QCFA, say, $M$, that recognizes $RPAL_i$ with bounded error in expected runtime at most  $2^{En^{1/2^{i+1}}}$ for some positive constant $E$.   By Theorem \ref{thm:remscrim}, 
 %  it must then be the case that $2^{En^{1/2^{i+1}}}\in \Omega\bigl(2^{\alpha \gamma_i n^{1/2^i}}\bigr)$ for some positive constant $\alpha$. But $n^{1/2^{i+1}}\in o(n^{1/2^i})$, which leads to a contradiction and completes the proof. 
%\end{proof}

%!!!! SUBLOG VERSION OF LAST PARAGRAPH!!!!!!

Assume that $RPAL_i \in \mathsf{BQTISP}(2^{O(n^{1/2^{i+1}})},o(\log n))$. This would entail the existence of a  QTM, say, $M$, that recognizes $RPAL_i$ with bounded error in sublogarithmic space $S(n)$ and in expected runtime at most  $2^{e_0 n^{1/2^{i+1}}}$ for some positive constant $e_0$. 
Since $\log \log D_{RPAL_i}(n) \in \Omega(\log n)$, we have $S(n) \in o(\log \log D_{RPAL_i}(n))$. 
By Theorem \ref{thm:remscrimQTMtradeoff}, 
   it must then be the case that 
   \begin{equation}
    2^{e_0 n^{1/2^{i+1}}}\in \Omega\big(2^{-b_0 S(n)} 2^{\frac{a_i}{2} n^{1/2^i} 2^{-b_0 S(n)}}\big)    \label{eq:lbsub}
   \end{equation}
    for some positive constant $b_0$  and for infinitely many values of $n$.
   
The fact that $S(n) \in o(\log n)$ implies that, for any positive constant $b_1$, $2^{-b_0 S(n)}\geq n^{-b_1}$ for all sufficiently large $n$.
We choose $b_1$ so that $\frac{1}{2^i}-b_1>\frac{1}{2^{i+1}}$, and manipulate Equation   \ref{eq:lbsub} to obtain 
   \begin{equation*}
    2^{e_0 n^{1/2^{i+1}}}\in \Omega\big(2^{\frac{a_i}{4} n^{(1/2^i)-b_1}} \big).   
   \end{equation*}
%where $d'_i$ is another positive constant. 
   But $n^{1/2^{i+1}}\in o(n^{(1/2^i)-b_1})$, which leads to a contradiction and completes the proof. 
\end{proof}
\section{A %quasi-polynomial 
time hierarchy  for computation with $o(\log \log n)$ space}\label{sec:hier2}

The time hierarchy we examined in the previous section was near the ``top end'' of the gap between the classes of problems that can be solved by 2QCFA's with polynomial time budgets, and those that require  exponential time budgets. The  hierarchy to be demonstrated in this section is placed near the ``bottom end'' of that gap. To achieve this result, we will again use a language family based on padded palindromes. There will be  two  main differences with the family studied in Section \ref{sec:hier1}: Firstly, the length of the padding in each member of the languages under consideration here will be much larger relative to the length of the palindromic prefix. Secondly, those palindromes in the prefixes will have to be punctuated by inserting additional separator substrings, as we describe below.

%Each language in the family $PPPAL$ that will be used to demonstrate the hierarchy in Section \ref{sec:hier} can be seen as a set of palindromes that have been punctuated and padded in a format that will enable a 2qcfa to  check their lengths efficiently. The common alphabet for these languages is $\Sigma=\{a,b,\#,0,1,\$\}$.  
\paragraph{The punctuation procedure.} We will begin by defining an auxiliary family of languages that contain punctuated palindromes, whose lengths are perfect powers. For any integer $i>0$, let $c_i=\lceil \log(i+1)\rceil$, and let $m_i$ be the smallest  integer greater than 1 such that $m^{i-1}<2^m$ for all $m\geq m_i$.\footnote{Logarithms are to the base 2.} 
The language $PPAL_i$ is  the set of all strings obtained by applying a ``punctuation'' procedure, to be described below, on  all palindromes of length $m^i$  on the alphabet $\Sigma_{\mathtt{ab}}$, for all $m\geq m_i$.

$PPAL_1$ is defined to be the concatenation $(\textit{PAL} \cap \Sigma_{\mathtt{ab}}\Sigma_{\mathtt{ab}}^+)\Set{1}$.

For $i>1$,
view any palindrome of length $m^i$ as the concatenation of $m^{i-1}$ ``segments'' of length $m$. 
For instance,  a palindrome $p$,
whose length is $5^3=125$, has 25 such segments, i.e.,
\[p=s_1 s_2 s_3 s_4 s_5 s_6 s_7 s_8 s_9 s_{10} 
s_{11} s_{12} s_{13} 
s_{14} s_{15} 
s_{16} s_{17} 
s_{18} s_{19} s_{20} 
s_{21} s_{22} 
s_{23} 
s_{24} 
s_{25},
\]
where each $s_j\in \Sigma_{\mathtt{ab}}^5$. For every $i\geq1$ and $j \in \{1,\dots,i\}$, let cbin$_i(j)$ denote the string of length $c_i$ that is the binary representation of the integer $j$.\footnote{Note that these strings can contain leading zeros.}
%!!!!!!!!!İLERİDE cbin'in(0)lısı tarif edilsin!!!EĞER GEREKİRSE!!!!!!!!
Given any  $m^i$-symbol palindrome as exemplified above, the punctuation procedure creates a longer string by  inserting  a binary delimiter substring of the form cbin$_i(j)$  after each $m$-symbol segment, including the last one. 
%Every  is converted to a
% member of $PPAL_i$ 
%by inserting . 
For instance, the palindrome $p$ in our  example above would give rise to the string 
\tiny
\[w= s_1 01s_2 01s_3 01s_4 01s_5 \mathbf{10}s_6 01s_7 01s_8 01s_9 01s_{10} \mathbf{10}
s_{11} 01s_{12} 01s_{13} 01
s_{14}01 s_{15} \mathbf{10}
s_{16}01 s_{17} 01
s_{18}01 s_{19}01 s_{20} \mathbf{10}
s_{21} 01 s_{22} 01
s_{23} 01
s_{24} 01
s_{25}11
\]
\normalsize
in $PPAL_3$. We will now specify how the precise ordering of the numerical values of the binary delimiters to be inserted in a palindrome of a given length is determined in the general case.

%For each j, the particular binary delimiter substring to be inserted after the segment sj is cbini(l), where l is the greatest integer in {1, 2, . . . , i} which satisfies the equation j = kmTOTHEl−1, (9) such that k is also constrained to be an integer.

%The insertion rule described above imposes a particular  pattern on the ordering of the binary delimiters based on their numerical values. 
\begin{definition}\label{def:pat}
    For all $i>1$ and  $m\geq m_i$, a sequence $S$ of binary delimiters is said to be  \textit{well-ordered for (i,m)} %\textit{well-ordered for $PPAL_i$ with segment length $m$} 
    if it satisfies the following  conditions:
    \begin{enumerate}
    \item All members of $S$ are elements of the  set $\Set{\text{cbin}_i(1),\dots,\text{cbin}_i(i)}$,
        \item The delimiter cbin$_i(i)$ appears exactly once in $S$, as its last element, and
        \item For all  $j\in \{2,\dots, i\}$, $S$ contains exactly $m-1$ occurrences of the delimiter  cbin$_i(j-1)$ in its prefix which ends with the leftmost occurrence of a delimiter of the form cbin$_i(l_0)$, and also between any two successive occurrences of any two delimiters  cbin$_i(l_1)$ and cbin$_i(l_2)$, where $l_0,l_1, l_2 \geq j$.
    \end{enumerate}
\end{definition}
 The sequence of binary delimiters in our example string $w$ is well-ordered for (3,5). All delimiters in the sequence are from the set $\Set{\text{cbin}_3(1),\text{cbin}_3(2),\text{cbin}_3(3)}=\Set{01,10,11}$, the sequence ends with the unique occurrence of the delimiter 11, and the third condition in Definition \ref{def:pat} is satisfied: For instance, we see that the sequence contains exactly $4$ occurrences of 10 (shown in bold in the description of $w$ above) in its prefix ending with 11, and it also contains $4$ occurrences of 01 between, say, the rightmost occurrence of  10 and the occurrence of 11.
%\begin{fact}\label{fact:pat}
%For any $i>1$, $m\geq m_i$, and $j\in \{2,\dots, i\}$, any member of $PPAL_i$ based on an $m^i$-symbol palindrome contains exactly $m-1$ occurrences of the delimiter substring cbin$_i(j-1)$ in its prefix which ends with the leftmost occurrence of cbin$_i(j)$, and also between any two successive occurrences of any two delimiters  cbin$_i(l_1)$ and cbin$_i(l_2)$, where $l_1, l_2 \geq j$.\footnote{This is because the open interval between any two successive multiples of $m^{j-1}$ contains precisely $m-1$ multiples of $m^{j-2}$.}
%\end{fact}

\begin{lemma}
 For all $i>1$ and  $m\geq m_i$, there exists a unique sequence $S_{i,m}$ of binary delimiters that is  well-ordered for $(i,m)$. Furthermore, the length of $S_{i,m}$ is $m^{i-1}$.
%For all $i>1$, a sequence $S_{BD}$ of binary delimiter substrings is  well-ordered for $PPAL_i$ if and only if there exists a member of  $PPAL_i$ which  is obtained from some palindrome $p$ by inserting the $j$th  member of $S_{BD}$ after the $j$th segment of $p$ (for $j\in \{1,\dots, |S_{BD}|\}$) as prescribed by the punctuation procedure. 
\end{lemma}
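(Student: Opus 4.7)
My plan is to prove existence, uniqueness, and the length formula $|S_{i,m}| = m^{i-1}$ simultaneously by induction on $i$. For the base case $i = 2$, condition 3 only applies with $j = 2$; condition 2 forces $\text{cbin}_2(2)$ to appear once at the end, and this is the only level-$\geq 2$ element. The $j=2$ clause of condition 3 then requires exactly $m-1$ copies of $\text{cbin}_2(1)$ in the prefix ending with $\text{cbin}_2(2)$, so the unique well-ordered sequence is $S_{2,m} = \text{cbin}_2(1)^{m-1}\text{cbin}_2(2)$, of length $m = m^{2-1}$.

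For the inductive step with $i > 2$, I would first apply condition 3 with $j = i$. Since $\text{cbin}_i(i)$ is the unique level-$\geq i$ delimiter and is pinned at the end, this clause forces exactly $m-1$ earlier occurrences of $\text{cbin}_i(i-1)$, and these together with the final $\text{cbin}_i(i)$ constitute all of the level-$\geq i-1$ delimiters. They partition $S_{i,m}$ into $m$ segments $B_1 D_1, B_2 D_2, \ldots, B_m D_m$, where $D_1 = \cdots = D_{m-1} = \text{cbin}_i(i-1)$, $D_m = \text{cbin}_i(i)$, and each $B_k$ consists only of delimiters of level at most $i-2$. Applying condition 3 with $j = i-1$ then forces each $B_k$ to contain exactly $m-1$ copies of $\text{cbin}_i(i-2)$. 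For each smaller $j$, the symmetry of condition 3 over the indices $l_0, l_1, l_2 \geq j$ implies that the count of $\text{cbin}_i(j-1)$ between successive level-$\geq j$ delimiters is uniform across gaps regardless of which block they lie in, so that under the relabeling $\text{cbin}_i(l) \mapsto \text{cbin}_{i-1}(l)$ for $l \leq i-2$ and $D_k \mapsto \text{cbin}_{i-1}(i-1)$, each segment $B_k D_k$ becomes exactly a sequence well-ordered for $(i-1, m)$.

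By the inductive hypothesis, each such segment is uniquely determined and the segments are identical across $k$ modulo the identity of $D_k$; this yields both existence and uniqueness of $S_{i,m}$, with the explicit construction being the concatenation of $m$ relabeled copies of $S_{i-1,m}$, replacing $D_m$ by $\text{cbin}_i(i)$. The length formula then follows from $|S_{i,m}| = m \cdot |S_{i-1,m}| = m \cdot m^{i-2} = m^{i-1}$.

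The main technical obstacle I foresee is verifying the translation of the ``prefix'' clause of condition 3 under localization: in the global sequence, condition 3 only imposes a prefix-based count for the leftmost level-$\geq j$ delimiter, but when restricted to a block $B_k D_k$ with $k > 1$ viewed as a standalone sequence, the corresponding prefix clause must also be satisfied. I would handle this by observing that the ``between'' clause in $S_{i,m}$ linking $D_{k-1}$ to the leftmost level-$\geq j$ element of $B_k$ supplies exactly the count needed for the local prefix clause, since $D_{k-1}$ itself contributes no symbols of the relevant lower levels.
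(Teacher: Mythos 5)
Your proof is correct, but it takes a genuinely different route from the paper. The paper splits the lemma into two separate arguments: it first derives the length $m^{i-1}$ for \emph{any} well-ordered sequence by a downward induction on $j$ inside a fixed sequence, using the counting function $\mathrm{sumdel}_S(j)$ (the number of delimiters of level at least $j$), showing $\mathrm{sumdel}_S(j)=m^{i-j}$ via the recurrence $(m-1)m^{i-j}+m^{i-j}=m^{i-(j-1)}$; it then argues existence and uniqueness by a stage-by-stage slot-filling construction, assigning $\mathrm{cbin}_i(i)$ to the last slot and arguing that each subsequent level's delimiters are forced into equally spaced positions (every $\mathrm{cbin}_i(l)$ at a slot index that is a multiple of $m^{l-1}$). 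You instead induct on $i$, exploiting the self-similarity of the definition: conditions 2 and 3 at levels $j=i$ and $j=i-1$ force the block decomposition $B_1D_1\cdots B_mD_m$, and each relabeled block is exactly a sequence well-ordered for $(i-1,m)$, so $S_{i,m}$ is the concatenation of $m$ relabeled copies of $S_{i-1,m}$ and $|S_{i,m}|=m\cdot m^{i-2}$. Your localization of the prefix clause (the global ``between'' clause from $D_{k-1}$ to the first level-$\geq j$ delimiter of the next block playing the role of the local prefix clause, with $D_{k-1}$ contributing nothing to the lower-level count) is the right observation and handles the only delicate point; note only that the leftmost level-$\geq j$ element you refer to should be taken in $B_kD_k$ rather than $B_k$, since for $j=i-1$ it is $D_k$ itself, though this does not affect the count. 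One small hypothesis check worth making explicit is that $m\geq m_i$ implies $m\geq m_{i-1}$ (immediate since $m^{i-2}\leq m^{i-1}$), so the inductive hypothesis is applicable. What each approach buys: your recursion yields a crisper, fully rigorous uniqueness argument and an explicit recursive description of $S_{i,m}$, whereas the paper's counting argument establishes the length bound for an arbitrary well-ordered sequence without first constructing it, and its construction sketch is comparatively informal.
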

\begin{proof}
We start by showing that any 
  %  For any $i>1$, let $w$ be a member of $PPAL_i$ obtained from some palindrome $p\in\Sigma_{\mathtt{ab}}^{m^i}$ for some integer $m\geq m_i$. Since $p$ has been sliced to $m^{i-1}$ segments, the binary delimiter inserted after the last segment by the punctuation procedure is cbin$_i(i)$. No other segment has an index large enough to merit the insertion of cbin$_i(i)$. This satisfies the first condition in Definition \ref{def:pat}. The second condition is satisfied because the open interval between any two successive multiples of $m^{j-1}$ contains precisely $m-1$ multiples of $m^{j-2}$.
   sequence $S$ of binary delimiters which is  well-ordered for $(i,m)$ must have length $m^{i-1}$. %, where $m$ is now the integer that dictates the number of delimiter occurrences in the second condition of Definition \ref{def:pat}.
   To see this, let sumdel$_S(j)$ denote the total number of occurrences of binary delimiters of the form cbin$_i(t)$,  where $t\geq j$, in $S$. 
We claim that sumdel$_S(j)=m^{i-j}$ for all $j\in\{1,2,\dots,i\}$. Note that the claim holds for $j=i$:  sumdel$_S(i)=m^{i-i}=1$, and there is indeed only one copy of cbin$_i(i)$ in $S$ by the second condition of Definition \ref{def:pat}. Now assume that the claim holds for some $j\in\{2,\dots,i\}$, so that $S$ contains %a subsequence $SS_{BD}$ of 
precisely $m^{i-j}$ delimiters of the form cbin$_i(t)$,  where $t\geq j$. Consider the subsequence of $S$ which is composed of these ``highly valued'' delimiters. By the third condition of Definition \ref{def:pat}, $S$ contains $m-1$ occurrences of cbin$_i(j-1)$ both to the left of the first element %occurrence of cbin$_i(j)$
and also between any two successive elements of this subsequence. To calculate sumdel$_S(j-1)$, we add the number of all these occurrences of  cbin$_i(j-1)$ to sumdel$_S(j)$, obtaining $(m-1)m^{i-j}+m^{i-j}=m^{i-(j-1)}$, and proving the claim.
Since sumdel$_S(1)$ equals the total number of delimiters in $S$, we see that 
$|S|=m^{i-1}$.

It remains to show that there exists a unique delimiter sequence $S_{i,m}$ that is  well-ordered for $(i,m)$ for every $i>1$ and $m>m_i$. We describe how to use Definition \ref{def:pat} to construct $S_{i,m}$, by assigning delimiters of the form cbin$_i(j)$ to specified ``slots'' of $S$ stage by stage for all $j$ from $i$ down to 1. In the first stage, we assign  cbin$_i(i)$ to the last (i.e., $m^{i-1}$th) slot. At this point, we have a ``vacant interval'' of $m^{i-1}-1$ empty slots to the left of the single occurrence of the delimiter cbin$_i(i)$. The assignments to be made to some of those slots in the later stages will break this interval into shorter and shorter vacant intervals, until all remaining slots are filled in the final stage with the assignment of the cbin$_i(1)$'s. We note that Definition \ref{def:pat} dictates  each stage to assign its delimiters  so that all resulting vacant intervals have equal length, since the procedure that will fill those intervals (and therefore the total number of delimiters that will fill any such interval completely) is identical for every interval. This forces an ordering where, for all $j>2$ and $l\geq j$, every pair of successive occurrences of delimiters of the form cbin$_i(l)$ in the assignment are equidistant (by, say, $d_j$ slots) from each other in the resulting sequence. Furthermore, the leftmost occurrence of such a delimiter must also be $d_j$ slots away from the beginning of the sequence.\footnote{In more detail,  every such  cbin$_i(l)$ delimiter must be located at a slot whose index  in the sequence is a multiple of $m^{l-1}$.}  Since each stage is constrained to assign the associated delimiters to unambiguously determined locations, the resulting sequence is unique.
\end{proof}
It will be shown shortly that adherence to this ordering pattern can be checked in polynomial time by a 2QCFA. 

For every $i$, let 
\[PAL_i=\Set{p \mid p\in \textit{PAL}, \left| p \right|=m^i, m\geq m_i},\]
and let punc$_i(p)$ denote the string that would be obtained from a member $p$ of $PAL_i$ by punctuating it with binary delimiters using the procedure described above. Formally, 
\[PPAL_i = \Set{w \mid w=\text{punc}_i(p), p \in PAL_i}.\]

\paragraph{The padding procedure.} Let the function segl($w$) denote the length of the first segment (i.e., the prefix up to the leftmost occurrence of a member of $\{0,1\}$) for any given member $w$ of $PPAL_i$.
As described earlier, any such string is associated with a number $m$, which is the common length of the $m^{i-1}$ segments that constitute its underlying palindrome. We note that $m=\text{segl}(w)$. 
The following definitions encapsulate a padding procedure that is used to convert members of $PPAL_i$ (for any $i$) to much longer strings.

\[
p_{i,l}=
\begin{cases}
    \$\mathtt{a}0^{c_i}\mathtt{a}, & \text{if } l = 1, \\
    \$\mathtt{a}^{l}(0^{c_i}\mathtt{a}^l)^{2^l-1}p_{i,l-1}, & \text{if } l > 1.
\end{cases}
\]

%\[
%XPPAL_i=\{w(\mathtt{a}^m0^{c_i})^{2^m-m^{i-1}-1}\mathtt{a}^m \mid w \in PPAL_i, m=\text{segl}(w)\}
%\]
%\[
%XPAD_{1}=\{\$a0^{c_i}a\}
%\]

%\[
%XPAD_{l}=\{\$a^{l}(0^{c_i}a^l)^{2^l-1}w \mid  w\in XPAD_{l-1} \}
%\]

%\[
%PPPAL_i=\{wy \mid w \in XPPAL_i, m=\text{segl}(w), y\in XPAD_{m-1}\}
%\]

\[
PPPAL_i=\{w(\mathtt{a}^m0^{c_i})^{2^m-m^{i-1}-1}\mathtt{a}^m p_{i,m-1} \mid w \in PPAL_i, m=\text{segl}(w)\}
\]

The  definitions presented above correspond to extending some string $w\in PPAL_i$ stage by stage, concatenating a new substring to the right at each stage, as follows:
\begin{itemize}
    \item Append the substring $(\mathtt{a}^m0^{c_i})^{2^m-m^{i-1}-1}\mathtt{a}^m$
    \item[] \{After this stage, the extended string consists of $2^m$ segments of length $m$. Of these, the first $m^{i-1}$ are of the form $\Sigma_{\mathtt{ab}}^m$, and the remaining $2^m-m^{i-1}$ are of the form $\mathtt{a}^m$. Each neighboring pair of  newly added segments are separated by the delimiter substring $0^{c_i}$.\}
    \item For each integer $l$ from $m-1$ down to and including 1, append the substring $\$\mathtt{a}^{l}(0^{c_i}\mathtt{a}^l)^{2^l-1}$
    \item[] \{Each iteration of this loop appends a (shorter) \textit{block} consisting of $2^l$ $\mathtt{a}$-segments of length $l$. Neighboring $\mathtt{a}$-segments of equal length are separated by the delimiter substring $0^{c_i}$, whereas neighboring $\mathtt{a}$-segments of different lengths are separated by the symbol $\$$ throughout the resulting string.\}
\end{itemize}

For any $i>0$, the corresponding \textit{padded punctuated palindromes language} $PPPAL_i$ is the set of strings obtained by applying the padding procedure described above to every member of the corresponding punctuated palindromes language $PPAL_i$.

Let us  examine the common pattern %(to be named ?!?!??!??!?$\mathbf{P}$) 
imposed by our padding procedure on the resulting strings.  Employing the terminology introduced above, we see that each string $w$ in $PPPAL_i$ consists of $m$ blocks, where $m=\text{segl}(w)$. Neighboring blocks are separated by $\$$ symbols. Within each block, there are multiple segments of 
equal length, separated by binary delimiter substrings. The rightmost block contains two segments of length 1. Every other block contains twice as many segments as the block  to its right, and the length of its segments  is one more than the corresponding length in the  block to its right.

\begin{lemma}\label{logn}
  For any $i,n>0$, if $m$ is the length of the leftmost segment of a string $s\in PPPAL_i$ such that $|s|=n$, then $m<\log n$, and there exists a positive constant $d_i$  such that $m > d_i \log n$.  
\end{lemma}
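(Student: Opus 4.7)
The plan is to express, or at least tightly bound, $|s|$ as a function of $m = \text{segl}(s)$ and $i$, and then invert the resulting estimates. The key point is that the dominant growth is $|s| = \Theta_i(m \cdot 2^m)$, from which both inequalities follow by elementary logarithmic manipulation.

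First I would add up the contributions of each piece specified in the definition of $PPPAL_i$. The punctuated palindrome $w \in PPAL_i$ contains $m^{i-1}$ segments of length $m$ together with the same number of binary delimiters of length $c_i$, so $|w| = m^{i-1}(m + c_i)$. The first padding block contributes $(2^m - m^{i-1} - 1)(m + c_i) + m$ additional symbols, producing a uniform sequence of $2^m$ length-$m$ segments separated by $c_i$-long delimiters and bringing the running length to $2^m(m + c_i) - c_i$. For each $l \in \{m-1, \ldots, 1\}$, the block appended in the loop contributes $2^l(l + c_i) + 1 - c_i$ symbols. Applying the standard identities $\sum_{l=1}^{m-1} 2^l = 2^m - 2$ and $\sum_{l=1}^{m-1} l \cdot 2^l = (m - 2)\cdot 2^m + 2$, I would collapse everything to a closed form whose leading behavior is $2(m + c_i - 1)\cdot 2^m$, yielding constants $A_i, B_i > 0$ such that
\[
A_i \cdot m \cdot 2^m \;\leq\; |s| \;\leq\; B_i \cdot m \cdot 2^m
\]
for every $m \geq m_i$. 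The defining condition on $m_i$ is precisely what guarantees that the exponent $2^m - m^{i-1} - 1$ in the first padding block is nonnegative, so the construction is well-defined.

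With these bounds in hand, the upper estimate $m < \log n$ follows by discarding everything except the $2^m$ length-$m$ segments in the first padding block: this already gives $n \geq m \cdot 2^m > 2^m$, where the strict inequality uses $m \geq m_i \geq 2$. For the lower estimate $m > d_i \log n$, I would take logarithms in $n \leq B_i \cdot m \cdot 2^m$ and substitute the just-derived $m \leq \log n$, obtaining $\log n \leq m + \log \log n + O_i(1)$, hence $m \geq \log n - \log \log n - O_i(1)$. For any fixed constant $d_i < 1$, this exceeds $d_i \log n$ for all sufficiently large valid $n$; since only finitely many lengths $n$ of strings in $PPPAL_i$ fall below that threshold, shrinking $d_i$ further to a positive value handles those remaining cases.

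I do not anticipate any deep obstacle. The only mildly fiddly step is the bookkeeping of the summation, where one must take care not to conflate the $m^{i-1}$ internal delimiters of $w$ with the $0^{c_i}$ delimiters inserted by the padding, and not to double-count the segment that joins $w$ to the first padding block. Once the two-sided asymptotic estimate on $|s|$ is established, both inequalities are immediate.
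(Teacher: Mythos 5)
Your proof is correct and takes essentially the same approach as the paper: both arguments reduce the lemma to an exact length count showing $n=\Theta_i\!\left(m\,2^m\right)$ (your block-by-block summation collapses to the same closed form the paper obtains in Equation \ref{eq:mlogn} by counting segments and separators globally) and then invert this relation logarithmically to get $m<\log n$ and $m>d_i\log n$. The only cosmetic difference is the bookkeeping style of the summation, so there is nothing to change.
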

\begin{proof}
It is easy to prove by induction that any $m$-block string in $PPPAL_i$  contains $2^{m+1}-2$ segments. The total number of ``separators'' between segments is therefore $2^{m+1}-3$. Of these, $m-1$ are the $\$$ symbols which separate neighboring blocks, so there are $2^{m+1}-m-2$ binary delimiter substrings. For any fixed $i$, the total number of symbols occurring in the separators is therefore $c_i(2^{m+1}-m-2)+m-1$. It remains to consider the combined length of the segments. This is  $\sum_{l=1}^{m} l 2^l$, which is seen by another simple induction to equal
$(m-1)2^{m+1}+ 2$. So for any string of length $n$ matching  this  pattern, and having  $m$ symbols in its leftmost segment, 
\begin{equation}
    n=(m-1)2^{m+1}+c_i(2^{m+1}-m-2)+m+1. \label{eq:mlogn}
\end{equation}
This lets one conclude that  %$m\in\Theta(\log n)$ and 
$m<\log n$, and that there exists a positive constant $d_i$  such that $m > d_i \log n$.
\end{proof}

Note that the  analysis in the proof of Lemma \ref{logn} uses only the numbers, lengths and ordering of the segments and separators in the extended string, and is valid for any string matching this general pattern, even if it has no prefix that is a member of some $PPAL_i$. Just as in Theorem \ref{thm:mrpal}, the following result  will be based on the fact that the dominant term of the expected runtime of an optimal small-space QTM recognizing $PPPAL_i$ is the one associated with the task of determining whether the relevant prefix of the input contains a palindrome, and a 2QCFA can quickly check all other syntactic requirements associated with membership of $PPPAL_i$  with a very small probability of error.

\begin{theorem}\label{thm:pppal}
        For all $i\geq1$, $PPPAL_i \in \mathsf{BQTISP}(2^{O((\log n)^i)},O(1))$.
\end{theorem}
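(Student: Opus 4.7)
The plan is to follow the template of Theorem \ref{thm:mrpal} and construct a 2QCFA $M_{PPPAL_i}$ whose main loop cheaply verifies (with exponentially amplified confidence) every syntactic constraint imposed by the padding and punctuation procedures, and whose final stage runs the \textsc{Pal} subroutine on the underlying (punctuated) palindrome sitting in the leftmost block. By Lemma \ref{logn}, the segment length $m$ satisfies $m<\log n$, so the palindrome to be checked has length $m^i<(\log n)^i$, giving the desired $2^{O((\log n)^i)}$ bound on the expected runtime of \textsc{Pal}.

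First I would have a deterministic regular-expression stage (R) verify the gross shape of the input: one leftmost block in $(\Sigma_{\mathtt{ab}}^{+}\Set{0,1}^{c_i})^{+}$, followed by $m-1$ further $\$$-separated blocks each of the form $(\mathtt{a}^{+} 0^{c_i})^{*}\mathtt{a}^{+}$, where $c_i=\lceil \log(i+1)\rceil$ is a constant built into the machine. All of this is a regular property since $i$ (and hence $c_i$) is fixed. The main loop (M) then repeatedly performs \textsc{SameLength}-style subsequence-count comparisons to confirm: (i) within every block, all $\mathtt{a}$-segments have the same length; (ii) the segment length increases by exactly $1$ from each block to the block on its left; (iii) the segment count doubles from each block to the block on its left (a \textsc{TwiceAsLong} comparison, obtained from $M_{EQ}$ in the same way as \textsc{SameLength}); (iv) the rightmost block has segments of length $1$ with exactly two segments; and (v) for the leftmost block, the sequence of binary delimiters is well-ordered for $(i,m)$ in the sense of Definition \ref{def:pat}.

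The step I expect to be most delicate is (v). However, since the number of distinct delimiter values is the constant $i$, and since well-orderedness only requires that between any two consecutive occurrences of any delimiter cbin$_i(l)$ with $l\geq j$ there be exactly $m-1$ occurrences of cbin$_i(j-1)$, this check decomposes into $O(i)$ subsequence-count tasks: for each $j\in\{2,\dots,i\}$, sweep through the leftmost block, and at each occurrence of a delimiter of ``value at least $j$'' launch a \textsc{SameLength} call comparing the number of cbin$_i(j-1)$ delimiters in the interval to its right up to (and including) the next cbin$_i(\geq j)$ delimiter against the length $m$ of the first segment (which is used as the canonical source of $m$ throughout the algorithm, just as $s_0$ plays that role in Section \ref{sec:hier1}). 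Each such call runs in polynomial time and has error at most $\varepsilon$, and there are only $O(n)$ of them per sweep. Finally I would verify the unique-occurrence constraint on cbin$_i(i)$ as a deterministic scan.

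As in the proof of Theorem \ref{thm:mrpal}, I would wrap loop (M) inside the standard random-walk-based biased-coin amplification (stage (RW)) so that the probability of a nonmember surviving to stage (P) is at most $\varepsilon^{2^{k_{\varepsilon}}(n+1)}$, chosen small enough that this probability multiplied by the worst-case $2^{c_{\varepsilon}n}$ runtime of \textsc{Pal} is $o(1)$. Stage (P) then calls \textsc{Pal}$(\$,\varepsilon)$, whose skip-over-non-$\Sigma_{\mathtt{ab}}$ feature (already noted in the proof of Theorem \ref{thm:mrpal}) feeds it exactly the underlying palindrome $p\in PAL_i$ of length $m^i<(\log n)^i$. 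On members of $PPPAL_i$ this stage therefore contributes expected time at most $2^{c_{\varepsilon}(\log n)^i}=2^{O((\log n)^i)}$, while on nonmembers the contribution is negligible, so the overall expected runtime is $2^{O((\log n)^i)}$ and the error is the one-sided bound $\varepsilon$ inherited from \textsc{Pal}, as required.
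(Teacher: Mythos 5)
Your overall architecture (deterministic format check, a \textsc{SameLength}/\textsc{TwiceAsLong}-based main loop amplified by the random-walk coin, then a single \textsc{Pal} call on the short prefix, with runtime controlled via Lemma \ref{logn}) is exactly the paper's plan, but two of your concrete choices break down. First, your implementation of the well-orderedness check (v) compares, for each $j$, the number of cbin$_i(j-1)$ delimiters between two consecutive ``high-valued'' delimiters against the length of \emph{the first segment of the input}, used as a canonical ruler. That comparison pairs two non-adjacent tape regions, and the inner boundaries of the counting region are specific occurrences of cbin$_i(\geq j)$ that are not locally distinguishable from the other occurrences of the same delimiters. The \textsc{SameLength} machinery simulates $M_{EQ}$ by shuttling the physical head back and forth between the two compared intervals, and it only works because the three signposts are recognizable from local structure whenever the head crosses them; with your global ruler, once the head walks to the first segment it has no constant-space way to find its way back to \emph{that particular} inter-delimiter gap (it would have to count which occurrence it was at). This is precisely why the paper's loop (L) uses as ruler the segment \emph{immediately to the right} of each counting region (stage (C3)), so the two intervals share a locally recognizable boundary, and then leans on the already-performed (C1)/(C2) checks to argue that, except with probability $\varepsilon$, all these local rulers have the common length $m$; comparing against the open interval $(p_m,p_r)$ also silently handles the $m$ versus $m-1$ discrepancy that your description glosses over.

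Second, your stage (P) calls \textsc{Pal}$(\$,\varepsilon)$ and claims the skip feature feeds it exactly the underlying palindrome of length $m^i$. It does not: in $PPPAL_i$ the first $\$$ only appears at the start of $p_{i,m-1}$, i.e.\ after the \emph{entire} first block, which contains not only the punctuated palindrome but also the $2^m-m^{i-1}$ padding segments $\mathtt{a}^m$. So \textsc{Pal}$(\$,\varepsilon)$ would be run on a string of $\Theta(n)$ symbols of $\Sigma_{\mathtt{ab}}$, destroying the $2^{O((\log n)^i)}$ bound, and worse, that string (a palindrome followed by a long run of $\mathtt{a}$'s) is in general not a palindrome, so genuine members of $PPPAL_i$ would be rejected with probability close to $1-\varepsilon$. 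The fix is the paper's: terminate the palindromeness check at the unique occurrence of cbin$_i(i)$ --- this is exactly why the punctuation procedure places that delimiter once, at the end of the punctuated palindrome --- i.e.\ call \textsc{Pal}$($cbin$_i(i),\varepsilon)$.
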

\begin{proof}

%We will be considering the  language
%\[\langL=\{0\#1,0\#1\$10\#11,0\#1\$10\#11\$100\#101\#110\#111,\cdots\},
%\]
%i.e. the set of  sequences listing the  integers in binary notation from 0 to $2^m-1$   in increasing order, (that is, ending with a postfix of $m$ 1's) such that each pair of consecutive numbers with the same number of bits is separated by a $\#$, and each pair of consecutive numbers with different numbers of bits is separated by a $\$$, for all $m>0$. It is well-known (Stearns? Goldreich?) that $\langL\in SPACE(\log \log n)$.
Figure \ref{fig:qalg2} is a template for constructing, for any desired $i\geq 1$, a 2QCFA $M_{PPPAL_i}$ that recognizes the language $PPPAL_i$ with negative one-sided bounded error. 
In the figure, $BD_i$ denotes the set of allowed binary delimiter substrings, i.e. all strings of the form cbin$_i(l)$ for $l\in\{1,\dots,i\}$, as well as $0^{c_i}$. %the $c_i$-bit binary representations of all integers from 0 to $i$. For any integer $l$, cbin$_i(l)$ denotes the $c_i$-bit binary representation of $l$, possibly containing leading zeros.
The general structure of the algorithm is similar to that of Figure \ref{fig:qalg1}: The main loop (M) is exited successfully with high probability only if the input matches the ``easily checkable'' pattern described above. The final stage (P) runs only on a prefix of the tape whose length is polylogarithmic ($\Theta((\log n)^i)$) in the overall input length $n$ with high probability, ensuring that the expected runtime of $M_{PPPAL_i}$ is in $2^{O((\log n)^i)}$. The parameters $\varepsilon$ and $k_{\varepsilon}$ play the same role as their counterparts in the algorithm of Figure \ref{fig:qalg1}. A more detailed description follows.

\begin{figure}[htb!]
    \caption{$M_{PPPAL_i}$}
    \label{fig:qalg2} 
    \begin{turing}[(RW)]{V}{\textit{  }}
        \titem[(R)]{Check whether the input begins with a symbol  in $\Sigma_{\mathtt{ab}}$, and ends with the postfix  $\$\mathtt{a}0^{c_i}\mathtt{a}$. If not, \reject.}
        \titem{Check whether 
        each separator (i.e. any substring sandwiched between members of $\Sigma_{\mathtt{ab}}$) in the input is either the symbol $\$$, or a member of $BD_i$. If not, \reject.}
        \titem{Check whether the separator 
        cbin$_i(i)$ appears exactly once in the string. If not, \reject.}
        \titem{Check whether  any separator of the form $0^{c_i}$ or $\$$ exists to the left of the separator cbin$_i(i)$. If so, \reject.}
        \titem{Check whether any symbol $\mathtt{b}$ or any
         separator other than   $0^{c_i}$ and $\$$ exists to the right of the separator cbin$_i(i)$. % Also  verify whether there exists at least one occurrence of $0^{c_i}$ between any two occurrences of the symbol $\$$.
         If so, \reject.}
        \titem[(M)]{
            Repeat ad infinitum:}
        \ttitem{
            \{%!!!!BUNU SAĞ ENDMARKERDAN BAŞLATMAK LAZIM, YOKSA EN SAĞDAKİ İKİ BLOKUN BOYLARINI KARŞILAŞTIRMIYOR!!!!! 
            Let $p_r$ denote the  position of the 
            right endmarker.\}}
            %leftmost symbol of the rightmost binary delimiter in the input.
        \ttitem{
            While the input contains at least one separator located to the left of $p_r$, do the following:}
        \tttitem{
            \{Let $p_m$ denote the position of the rightmost symbol of the nearest separator, to be named $Sep$,  to the left of $p_r$.\}}
        \tttitem{\{Let $p_l$ denote the position of the rightmost symbol of the nearest separator  (or the left endmarker,
if no such separator exists)   to the left of $p_m$.\}}
        \tttitem[(C1)]{    
            If $Sep\in BD_i$, run \textsc{SameLength}$(\langle\Sigma _{\mathtt{ab}},(p_l,p_m)\rangle,\langle\Sigma_{\mathtt{ab}},(p_m,p_r)\rangle,\varepsilon)$.}
        \tttitem[(C2)]{    
            If $Sep=\$$, run \textsc{SameLength}$(\langle\{\mathtt{a}\},(p_l,p_m)\rangle,\langle\{\mathtt{a},\$\},[p_m,p_r)\rangle,\varepsilon)$.}
        \tttitem{
            If the symbol at position   $p_r$ is % !!!YUKARDA DEDİĞİMİ YAPARSAM BUNU "right endmarker or" YAPMAM LAZIM!!!!!!!!AYRICA O ZAMAN C1 ve C2 SIRA DEĞİŞTİREBİLİR (TABİİ TEXTTE DE)!!!!!!
            $\$$ or the right endmarker, then check if the input contains another $\$$ to the left of $p_r$. If so, do the following:} 
        \ttttitem{
            \{Let $p_m$ denote the position of the nearest $\$$ located to the left of $p_r$.\}}
        \ttttitem{\{Let $p_l$ denote the position of the nearest $\$$  (or the left endmarker,
if no such $\$$ exists) located to the left of $p_m$.\}} 
        \ttttitem[(TW)]{    
            Run
            \textsc{TwiceAsLong}$(p_l,p_m,p_r,\varepsilon)$.}
        \tttitem{    
            \{Update $p_r$ to the position of the leftmost symbol of the nearest separator to the left of its present value.\}
            }
        \ttitem{If $i>1$, do the following:}
        \tttitem[(L)]{For each integer $j$ from $i$ down to and including 2, do the following:}
        \ttttitem{\{Let $p_m$ be the position of the $\mathtt{a}$ symbol immediately to the right of the  delimiter  cbin$_i(i)$.\}}
        \ttttitem{While  $p_m$  does not equal the position of the left endmarker, do the following:}
        \tttttitem{\{Let  $p_r$ denote the position of the leftmost symbol of  the nearest  separator located to the right of  $p_m$.\}}
        \tttttitem{\{Let  $p_l$   denote the position of the rightmost symbol of the nearest  separator of the form cbin$(k)$, where $k\geq j$ (or the left endmarker, if no such separator exists) located to the left of the binary delimiter which is immediately to the left of  $p_m$.\}}
        \tttttitem[(C3)]{Run \textsc{SameLength}$(\langle
\{\text{cbin}_i(j-1)\},
(p_l,p_m)\rangle,\langle
\Sigma_{\mathtt{ab}},
(p_m,p_r)\rangle,\varepsilon)$.}
        \tttttitem{\{If $p_l$ is the position of the left endmarker, update $p_m$ to that position as well. Otherwise, update $p_m$ to the position immediately to the right of $p_l$.\}}
        \ttitem[(RW)]{Move the tape head to the leftmost input symbol.}
        \ttitem{While the currently scanned symbol is not an endmarker, do the following:}
        \tttitem{Simulate a classical coin flip. If the result is ``heads'', move  one cell to the right. Otherwise, move  one cell left.}
        \ttitem{If the random walk ended at the right endmarker, do the following:}
        \tttitem{Simulate $k_{\varepsilon}$ coin flips. If all results are ``heads'', exit loop (M).}
        \titem[(P)]{
        Move the input head to the left endmarker and run \textsc{Pal}$($cbin$_i(i),\varepsilon)$.}
      %      While the input contains at least one delimiter symbol located to the right of S1, do the following:}
     %   \ttitem{
     %       (Let S2 be the first delimiter to the right of S1. Let S3 be the first element of D UNION ENDMARKERS appearing to the right of S2).}
   %     \ttitem{
     %       Call NEXT on the sequences of bits delimited by S1-S2 and S2-S3.}
    %    \ttitem{
   %         Update S1 to the next delimiter to the right.
     %       }
   %     \titem{If not yet rejected, \accept.}
    \end{turing}
\end{figure}

As its counterpart $M_{RPAL_i}$ in the proof of Theorem \ref{thm:mrpal}, $M_{PPPAL_i}$ starts with a deterministic format check in stage (R). This control, which can  be implemented in a single non-stop left-to-right scan of the tape, is passed without rejection if
\begin{itemize}
    \item the input consists of multiple $\$$-separated blocks, each of which contain multiple segments separated by binary delimiter substrings, and 
    \item the input is of the form $x$cbin$_i(i)y$, where cbin$_i(i)$ is in the leftmost block, neither the prefix $x$ nor the postfix $y$ contain cbin$_i(i)$, $x$ consists  of one or more segments of the form $\Sigma_{\mathtt{ab}}^+$ separated by delimiters of the form cbin$_i(j)$ where $1\leq j<i$, and all segments in $y$ are of the form $\mathtt{a}^+$, and are separated by separators of the form $0^{c_i}$ or   $\$$. 
\end{itemize}

In each iteration of the main loop (M), $M_{PPPAL_i}$ goes through all of the following constraints imposed by the padding and punctuation procedures on members of $PPPAL_i$:
\begin{enumerate}
    \item Any two segments separated by a binary delimiter substring should have the same length, and the \textsc{SameLength} procedure introduced in the proof of Theorem \ref{thm:mrpal} (Section \ref{sec:hier1}) is used to check this condition in stage (C1).
    \item Any two segments separated by a $\$$ belong in different blocks, so the segment on the right should be one symbol longer than the one on the right. The \textsc{SameLength} procedure is employed to check this condition by comparing the number of symbol occurrences in the left segment with the length of a sequence defined to consist of the $\$$ separator \textit{and} all the symbols of the right segment in stage (C2).
    \item The condition that each block should contain twice as many segments as the one to its right is checked by a new procedure named \textsc{TwiceAsLong}, which can be obtained by a straightforward modification of \textsc{SameLength}: To verify that a subsequence $\mathbf{L}$ on the left is exactly twice as long as another subsequence $\mathbf{R}$ on the right, one can first check whether $\mathbf{L}$ has an even number of elements (rejecting otherwise), and then perform a length comparison based on \textsc{SameLength} where only half of the members (the ones whose indices are even numbers, starting counting from 1 and from left to right) of $\mathbf{L}$ and all members of $\mathbf{R}$ are reflected on the ``virtual tape'' presented to the core comparison algorithm $M_{EQ}$. In stage (TW), \textsc{TwiceAsLong} compares the sequence of separators that precede the selected segments of the block on the left (in the interval $(p_l,p_m))$ with the sequence of separators that follow the segments of the block on the right (in the interval $[p_m,p_r)$) in this manner, and terminates with rejection unless it arrives at the conclusion that  the condition is satisfied. % !!!!Meq'İN ENDMARKERLARININ İNTERVAL DIŞINDA OLMASI İÇİN FOLLOW'LARI PRECEDE YAPMAK, SOLDAKİNİN SADECE ODD NUMBERED ELEMANLARIYLA KARŞILAŞTIRMAK, O YÜZDEN DE SOLDAKİNİN EN SOLDAKİ SEGMENTİNİ ENDMARKERMIŞ GİBİ KULLANMAK OLABİLİR!!!!!  
    The maximum probability $\varepsilon$ with which \textsc{TwiceAsLong} will fail to reject a subsequence pair violating the condition can be tuned as described in Section \ref{sec:hier1}.
    \item Finally, the inner loop (L) checks whether the sequence of delimiters in the prefix of the input terminated by  cbin$_i(i)$ is well-ordered according to Definition \ref{def:pat}. Recall that, for each $j>1$, Definition \ref{def:pat} requires the delimiter cbin$_i(j-1)$ to occur exactly $m-1$ times in several carefully specified subregions of the tape, where $m$ is the common length of all segments of the member of $PPAL_i$ under consideration. In stage (C3), $M_{PPPAL_i}$ checks each such condition by using \textsc{SameLength} to compare the number of occurrences of cbin$_i(j-1)$ in the relevant subregion (in the interval $(p_l,p_m)$) with  a sequence in  the interval $(p_m,p_r)$, whose length is  $l-1$, where $l$ is the length of the segment immediately to the right of that subregion in the input. Note that  $M_{PPPAL_i}$ can 
    reach loop (L) without rejection with probability at most $\varepsilon$ if all segments which are used as ``rulers'' in this stage do not have the same length $m$ as the very first segment.
\end{enumerate}
    
As with the machine $M_{RPAL_i}$ of Theorem  \ref{thm:mrpal}, the probability that $M_{PPPAL_i}$ rejects its input string $s$  of length $n$ in a single iteration of the main loop (M) is 0 if $s\in PPPAL_i$, and at least $1-\varepsilon$ if $s$ does not match the punctuation and padding patterns associated with members of $PPPAL_i$, and we again use the random walk in stage (RW) to ensure that (M)'s expected number of iterations is $2^{k_{\varepsilon}}(n+1)$ (unless the machine rejects earlier, having detected a violation of the above-mentioned patterns). 
%The expected value of the probability $p_{bad}$ that $M_{RPAL_i}$ fails to reject an input string $s\notin RL_i$ and proceeds to stage (P) is therefore at most $\varepsilon^{2^k(n+1)}$, where $n=|s|$. 
 
 Each iteration of (M) consists of $O(n)$ length comparisons, each of which run in polynomial time. The runtime of the random walk is linear. The total expected runtime of (M) is therefore polynomially bounded. 

Stage (P) of $M_{PPPAL_i}$ checks whether the sequence of symbols in $\Sigma_{\mathtt{ab}}$ contained in the input prefix that ends with cbin$_i(i)$ is a palindrome or not, with a one-sided error bound of $\varepsilon$. As in the proof of Theorem \ref{thm:mrpal}, let $c_{\varepsilon}$ be a positive constant so that the runtime of the \textsc{Pal} procedure would not exceed $2^{c_{\varepsilon}n}$ if it operated on the entire input of length $n$ with this error bound, and set  $k_{\varepsilon}$ to the smallest positive integer that makes the product $\varepsilon^{2^{k_{\varepsilon}}(n+1)}2^{c_{\varepsilon}n}$ near 0 as $n$ grows. Following the argument in that proof, this enables us to conclude that the  runtime of $M_{PPPAL_i}$ is determined by the asymptotic behavior of the \textsc{Pal} procedure on strings that match the padding and punctuation patterns controlled by the main loop (M).

%Any input string $s$ that matches those patterns has a prefix that ends with the substring cbin$_i(i)$, composed of one or more segments of the form $\Sigma_{\mathtt{ab}}^m$ for some $m>1$, and satisfies the condition described in Fact \ref{fact:pat}. Let sumdel$(j)$ denote the total number of occurrences of binary delimiters of the form cbin$_i(t)$,  where $t\geq j$, in that prefix. (This is a useful conceptualization, since sumdel$(1)$ equals the total number of segments in the region where the \textsc{Pal} procedure operates, based on which it is easy to calculate the total length of the string to be checked for palindromeness.)
%We claim that sumdel$(j)=m^{i-j}$ for all $j\in\{1,2,\dots,i\}$. To see this, note that the claim holds for $j=i$:  sumdel$(i)=m^{i-i}=1$, and there is indeed only one copy of cbin$_i(i)$ in the prefix. Now assume that the claim holds for some $j\in\{2,\dots,i\}$, so the prefix contains a sequence of $m^{i-j}$ ``highly valued'' delimiters of the form cbin$_i(t)$,  where $t\geq j$. By Fact \ref{fact:pat}, $s$ contains $m-1$ occurrences of cbin$_i(j-1)$ both to the left of the first member and also between any two successive members of this sequence. To calculate sumdel$(j-1)$, we add the number of all these occurrences of  cbin$_i(j-1)$ to sumdel$(j)$, obtaining $(m-1)m^{i-j}+m^{i-j}=m^{i-(j-1)}$, and proving the claim.

In this high-probability case, the string in $\Sigma_{ab}^+$ that the \textsc{Pal} procedure checks in stage (P) for palindromeness will have length $m^i$, where  $m$ denotes the length of  the leftmost segment of the input of length $n$.
By Lemma \ref{logn}, $m<\log n$. 
 We conclude that the expected runtime of both that stage and the overall algorithm $M_{PPPAL_i}$ is $2^{O((\log n)^i)}$.
\end{proof}

Note that the machine $M_{PPPAL_1}$ runs in polynomial time. This allows us to prove an enhanced quantum advantage result, establishing that polynomial-time 2QCFA's can perform tasks which are impossible for classical small-space machines, even when there is no restriction on the runtime of the classical machine. For this purpose, we will be using the following theorem of Freivalds and Karpinski, which builds on previous work by Dwork and Stockmeyer \cite{DS92}:

\begin{theorem}\label{thm:FK}
\cite{FK94} Let $A, B \subseteq \Sigma^*$ with $A \cap B = \emptyset$. Suppose there is an infinite set $I$ of positive integers and functions $g(m), h(m)$ such that $g(m)$ is a fixed polynomial in $m$, and for each $m \in I$, there is a set $W_m$ of words in $\Sigma^*$ such that:
\begin{enumerate}
    \item $\abs{w} \leq g(m)$ for all $w \in W_m$,
    \item there is a constant $c > 1$ %\utkanadd{UTKAN: bu $c$ eger $m$'den bagimsizsa o zaman yukarida, maddelere baslamadan once, ``and for each $m \in I$'' denmeden once tanitilsin} !!!ANLIYORUM AMA BURADA c BİR LOKAL DEĞİŞKEN GİBİ, TEK İŞİ Wm'in NÜFUSU SUPERPOLYNOMİALDIR DEMEK, ONLARIN STİLİNDEN ŞAŞMAYALIM!!! 
    such that $\abs{W_m} \geq c^m$, %\utkanrem{for all $m \in I$},  \utkanadd{UTKAN: ``for all $m \in I$'' belirtecini maddelere baslamadan once veriyoruz, neden tekrar edelim?}
    \item for  every $w, w' \in W_m$ with $w \neq w'$, there are words $u, v \in \Sigma^*$ such that:
    \begin{enumerate}
        \item $\abs{uwv} \leq h(m), \abs{uw'v} \leq h(m)$, and
        \item either
            $uwv \in A$ and $uw'v \in B$, or 
            $uwv \in B$ and $uw'v \in A$.
    \end{enumerate}
\end{enumerate}
Then, if a probabilistic Turing machine with space bound $S(n)$ separates $A$ and $B$, then $S(h(m))$ cannot be in $\oh{\log m}$.%}{Then, $A$ and $B$ cannot be separated by any \ptm{} that runs within a space bound $s(n)$ satisfying $s(h(m)) \in \oh{\log m}$.}
\end{theorem}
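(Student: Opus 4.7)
The plan is to apply the Dwork-Stockmeyer crossing-sequence technique in the refined form of Freivalds and Karpinski, summarizing a small-space PTM's probabilistic behavior on a factor of its input by a substochastic matrix indexed by ``boundary configurations,'' and then bounding how many essentially distinct such matrices can arise.

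Suppose for contradiction that some PTM $M$ separates $A$ and $B$ with bounded error $\varepsilon<\tfrac{1}{2}$ and space $S(n)$ satisfying $S(h(m))\in o(\log m)$ for infinitely many $m\in I$. Call a \emph{boundary configuration} of $M$ at length $n=h(m)$ a tuple specifying a machine state, a work-tape contents, a work-head position, and a side (left or right) at a distinguished boundary of some factor region; the total number is $K=2^{O(S(h(m)))}$. For each $w\in W_m$, define the substochastic matrix $T_w\in[0,1]^{K\times K}$ whose $(\alpha,\beta)$-entry is the probability that $M$, entering the $w$-region in boundary configuration $\alpha$, next exits through $\beta$ (with an auxiliary vector recording the probability of halting inside $w$ with an accept/reject verdict). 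The acceptance probability of $M$ on any input $uwv$ with $|uwv|\leq h(m)$ then depends on $w$ only through $T_w$ and its halting vector, together with an ``entry distribution'' determined by $u$ and an ``absorption profile'' determined by $v$.

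First I would argue that if two matrices $T_w,T_{w'}$ agree to a sufficiently small precision $\delta$, then $M$'s acceptance probabilities on $uwv$ and $uw'v$ differ by less than $1-2\varepsilon$ for every embedding context $(u,v)$ of bounded length; the distinguishability hypothesis of the theorem then forces the family $\{T_w:w\in W_m\}$ to be pairwise $\delta$-separated. Second, I would bound the number of $\delta$-separated such matrices. Exploiting that the acceptance probability depends on $T_w$ only through its action on at most $O(K)$ effective coordinates---concretely, through the resolvent $(I-T_w)^{-1}$ applied to an entry vector---the Freivalds-Karpinski refinement of Dwork-Stockmeyer bounds the number of essentially distinct matrices by $2^{O(K)}$. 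Combining with $|W_m|\geq c^m$ yields $c^m\leq 2^{O(K)}=2^{O(2^{S(h(m))})}$, so $S(h(m))=\Omega(\log m)$, a contradiction.

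The main obstacle lies in the counting step: a naive entrywise discretization of each $T_w$ yields only the weaker bound $2^{O(K^2\log h(m))}$, which is insufficient to force $\Omega(\log m)$ space when $h(m)$ is permitted to be very large. The key refinement is to funnel the distinguishability argument through the low-dimensional image of the resolvent of $T_w$, combined with careful absorbing-Markov-chain perturbation estimates, so that only $O(K)$ degrees of freedom ultimately contribute. Making this step quantitative---simultaneously controlling the required precision $\delta$ and the count of distinguishable matrices so that the resulting bound is independent of $h(m)$---is the technical heart of the proof.
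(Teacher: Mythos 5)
A preliminary remark: the paper does not prove this statement at all --- it is imported verbatim from Freivalds and Karpinski \cite{FK94} and used as a black box in the proof of Theorem \ref{thm:adv} --- so the comparison can only be with the known Dwork--Stockmeyer/Freivalds--Karpinski argument, whose general shape you have reproduced correctly: summarize the machine's behaviour on the factor $w$ by a matrix of boundary-crossing probabilities, argue that the distinguishing contexts force the matrices of distinct $w,w'\in W_m$ to be pairwise separated, and derive a contradiction with $|W_m|\ge c^m$ by counting separated matrices.

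The gap lies in the two steps you yourself flag, and it is more than an unfinished computation. First, your separation step is formulated with \emph{additive} precision: ``if $T_w$ and $T_{w'}$ agree to a sufficiently small precision $\delta$, then the acceptance probabilities differ by less than $1-2\varepsilon$ in every context.'' In the time-unbounded setting of this theorem that claim is false as stated: crossing probabilities can be arbitrarily small, and two chains whose entries agree within any fixed additive $\delta$ can still have absorption probabilities differing by almost $1$ (acceptance may be governed by the \emph{ratio} of two entries that are both far below $\delta$, since the head may cross the boundary of the $w$-region unboundedly many times). This is exactly why the perturbation lemma for absorbing Markov chains used in this line of work is stated in terms of entrywise \emph{multiplicative} closeness (within a factor $1+\Theta(1/K)$), and any correct proof must work with that relative notion. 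Second, your counting step asserts a bound of $2^{O(K)}$ pairwise-separated matrices ``through the low-dimensional image of the resolvent''; this is unsubstantiated, and also not what is needed: under the assumption $S(h(m))\in o(\log m)$ one has $K=2^{O(S(h(m)))}=m^{o(1)}$, so \emph{any} bound of the form $2^{K^{O(1)}}$ that is independent of $h(m)$ and of the (unbounded) running time already contradicts $|W_m|\ge c^m$; the $K$-versus-$K^2$ distinction is a red herring. The genuine difficulty, which your sketch does not resolve, is obtaining such an $h(m)$- and time-independent count for matrices that are pairwise far in the multiplicative sense --- note that already in dimension one the numbers $2^{-k}$ are pairwise far in that sense, so no generic ``count the separated matrices'' argument can work; one must exploit how the acceptance probability depends on the $T_w$'s, and this is where the proof in \cite{FK94} does its real work. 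As written, the proposal neither states the correct perturbation lemma nor supplies the counting argument, so it does not yet constitute a proof.
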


\begin{theorem}\label{thm:adv}
    $\mathsf{BQTISP}(n^{O(1)},O(1))\not\subseteq \mathsf{BPSPACE}(o(\log \log n))$.
\end{theorem}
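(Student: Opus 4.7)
The plan is to apply the Freivalds--Karpinski Theorem \ref{thm:FK} to the pair $A = PPPAL_1$ and $B = \Sigma^* \setminus PPPAL_1$. Since $PPPAL_1 \in \mathsf{BQTISP}(2^{O(\log n)},O(1)) = \mathsf{BQTISP}(n^{O(1)},O(1))$ by Theorem \ref{thm:pppal} at $i=1$, the quantum upper bound is already in hand, and it remains only to establish $PPPAL_1 \notin \mathsf{BPSPACE}(o(\log \log n))$.

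For each even integer $m \geq m_1$ (note $m_1 = 2$), I would take $W_m$ to be the set of all strings of length $m/2$ over $\Sigma_{\mathtt{ab}}$, so that $g(m) = m/2$ is a fixed polynomial and $|W_m| = 2^{m/2}$ witnesses the exponential-growth constant $c = \sqrt{2}$. Given any two distinct $w, w' \in W_m$, set $u = \lambda$ and
\[
v = w^R \cdot \mathtt{1} \cdot (\mathtt{a}^m \mathtt{0})^{2^m - 2}\, \mathtt{a}^m \cdot p_{1,m-1}.
\]
Then $uwv$ is precisely the image, under the punctuation-plus-padding construction, of the length-$m$ palindrome $ww^R$, so $uwv \in A$. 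The companion $uw'v$ has identical block lengths, separator positions, and delimiter values---all of these are governed solely by $m$---so it satisfies every syntactic requirement in the definition of $PPPAL_1$; the only remaining condition is that its initial $\Sigma_{\mathtt{ab}}$-block $w'w^R$ must be a palindrome, which holds iff $w'w^R = w(w')^R$, i.e., iff $w = w'$. Since $w \neq w'$, this forces $uw'v \in B$.

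Both $uwv$ and $uw'v$ have the common length $h(m) = \Theta(m \cdot 2^m)$ (compare the computation in the proof of Lemma \ref{logn}), so $\log h(m) = m + O(\log m)$ and hence $\log m = (1 + o(1)) \log \log h(m)$. Applying Theorem \ref{thm:FK} then forces any PTM that separates $A$ and $B$ to operate in space $S$ with $S(h(m)) \notin o(\log m)$; along the subsequence $n = h(m)$ this reads $S(n) \notin o(\log \log n)$, contradicting any hypothetical inclusion $PPPAL_1 \in \mathsf{BPSPACE}(o(\log \log n))$ and completing the separation.

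The one point that needs careful verification is that $uw'v$ fails to lie in $PPPAL_1$ on palindromic grounds rather than for some accidental syntactic reason. Fortunately, each condition imposed by the padding procedure, by the well-ordering requirement of Definition \ref{def:pat}, and by the trailing block $p_{1,m-1}$ depends only on segment and block lengths and on the positions of separators---data that are fixed once $m$ is fixed. Replacing $w$ by a distinct $w'$ of the same length leaves every such structural feature intact, so the distinction between $uwv$ and $uw'v$ is genuinely the palindrome test, exactly the feature whose classical intractability in sublogarithmic space we are leveraging. Once this observation is nailed down, the remainder of the proof is a direct bookkeeping application of the Freivalds--Karpinski framework.
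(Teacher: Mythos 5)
Your proposal is correct and follows essentially the same route as the paper: the quantum upper bound comes from Theorem \ref{thm:pppal} with $i=1$, and the classical lower bound is the same Freivalds--Karpinski application with $W_m=\Sigma_{\mathtt{ab}}^{m/2}$ (so $c=\sqrt2$), $u$ empty, $v=w^R1t$ where $t$ is the padding postfix, $h(m)=m2^{m+1}-1$, and $m\in\Theta(\log n)$ via Lemma \ref{logn}. The only cosmetic differences are that you write out $t$ explicitly and use $g(m)=m/2$ instead of $g(m)=m$ (and your aside about Definition \ref{def:pat} is vacuous at $i=1$, where the punctuation is just the trailing $1$), none of which affects correctness.
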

\begin{proof}
    By Theorem \ref{thm:pppal}, $PPPAL_1\in \mathsf{BQTISP}(n^{O(1)},O(1))$. We will use the following argument adapted from \cite{SG24} to show that this language cannot be recognized by any probabilistic Turing machine using $o(\log \log n)$ space.
    
    Note that the task of recognizing $PPPAL_1$ is identical to the task of separating $PPPAL_1$ and $\setneg{PPPAL_1}$. 

In the template of \Cref{thm:FK}, let $I$ be a set containing all positive even numbers, $g(m)=m$,  
and $W_m=\Sigma_{\mathtt{ab}}^{\sfrac{m}{2}}$. Note that Condition (2) of the theorem is satisfied by $c=\sqrt{2}$. 
Let $h$ be the function that maps each number $m$ to the length $n$ of any string in $PPPAL_1$ whose leftmost segment has length $m$, i.e. $h(m)=m2^{m+1}-1$ by setting $c_i=1$ in Equation  \ref{eq:mlogn}.

For every $w, w' \in W_m$, let
the corresponding $u$ be the empty string, and the corresponding $v$ be the string $w^R  1 t$, where $t$ is the unique postfix that satisfies $ww^R  1 t\in PPPAL_1$.  Clearly, $uwv \in PPPAL_1$ and $uw'v \notin PPPAL_1$. 

By Lemma \ref{logn}, we have $m\in\Theta(\log n)$, which enables us to use Theorem \ref{thm:FK} to conclude that no probabilistic Turing machine with space bound in $o(\log \log n)$ can recognize $PPPAL_1$.
\end{proof}

%Let us consider the length $n$ of $x_m$ as a function of $m$.
%For each $m$, $x_m$ contains $2^m -1$ separator symbols. The remaining symbols of $x_m$ are the bits required to write the numbers from 0 to $2^m -1$ in binary. It is easy to prove by induction that the total number of bits necessary to write all the integers from 2 to $2^m -1$ is $(m-1)2^m$ for all $m\geq 2$. Adding 2 to the sum for the  bits for 0 and 1, we conclude that $n=m2^m+1$. Taking logarithms to the base 2, 

We now  state our second hierarchy theorem.

\begin{theorem}\label{thm:hier2}
    For each $i \geq 1$, $\mathsf{BQTISP}(2^{O((\log n)^i)},o(\log \log n))\subsetneq \mathsf{BQTISP}(2^{O((\log n)^{i+1})},o(\log \log n))$.
\end{theorem}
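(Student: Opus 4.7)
The plan is to mirror the structure of the proof of \Cref{thm:hier1}, using the family $PPPAL_{i+1}$ as the witness separating the two classes. The upper inclusion $PPPAL_{i+1} \in \mathsf{BQTISP}(2^{O((\log n)^{i+1})}, O(1)) \subseteq \mathsf{BQTISP}(2^{O((\log n)^{i+1})}, o(\log \log n))$ is immediate from \Cref{thm:pppal}. The content is in proving the lower bound
\[
PPPAL_{i+1} \notin \mathsf{BQTISP}(2^{O((\log n)^{i})}, o(\log \log n)),
\]
which I would establish through a dissimilarity-based application of \Cref{thm:remscrimQTMtradeoff}.

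First I would lower-bound $D_{PPPAL_{i+1}}(n)$. Fix an even integer $m \geq m_{i+1}$, and let $n$ be the length of those strings in $PPPAL_{i+1}$ whose leftmost segment has length $m$. By \Cref{logn}, $m > d_{i+1} \log n$, and the underlying palindrome prefix has length $m^{i+1}$. Taking $W_n$ to be the set of all strings of length $m^{i+1}/2$ over $\Sigma_{\mathtt{ab}}$, each $w \in W_n$ is the unique half-palindrome prefix of some member of $PPPAL_{i+1}$, so any two distinct $w, w' \in W_n$ are distinguished by appending $w^R$ followed by the unique deterministically determined punctuation and padding suffix. This gives
\[
D_{PPPAL_{i+1}}(n) \geq |W_n| \geq 2^{\frac{1}{2}(d_{i+1} \log n)^{i+1}}
\]
for infinitely many $n$, and in particular $\log \log D_{PPPAL_{i+1}}(n) \in \Omega(\log \log n)$, so any $S(n) \in o(\log \log n)$ satisfies $S(n) \in o(\log \log D_{PPPAL_{i+1}}(n))$, making \Cref{thm:remscrimQTMtradeoff} applicable.

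Suppose for contradiction that there is a QTM $M$ recognizing $PPPAL_{i+1}$ with bounded error in space $S(n) \in o(\log \log n)$ and expected time at most $2^{e_0 (\log n)^{i}}$ for some constant $e_0 > 0$. By \Cref{thm:remscrimQTMtradeoff} there is a constant $b_0 > 0$ such that
\[
2^{e_0 (\log n)^{i}} \in \Omega\bigl(2^{-b_0 S(n)} \cdot 2^{\frac{1}{2}(d_{i+1})^{i+1}(\log n)^{i+1}\cdot 2^{-b_0 S(n)}}\bigr)
\]
for infinitely many $n$. Since $S(n) \in o(\log \log n)$, for any fixed $b_1 > 0$ we have $2^{-b_0 S(n)} \geq (\log n)^{-b_1}$ for all sufficiently large $n$. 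Choosing $b_1 = 1/2$, so that $(i+1) - b_1 = i + 1/2 > i$, and absorbing constants, the displayed inequality forces
\[
2^{e_0 (\log n)^{i}} \in \Omega\bigl(2^{\frac{1}{2}(d_{i+1})^{i+1}(\log n)^{i + 1/2}}\bigr),
\]
which is a contradiction since $(\log n)^{i} \in o((\log n)^{i + 1/2})$. This contradiction completes the separation.

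The main obstacle I would anticipate is verifying that the dissimilarity bound really holds, i.e. that the punctuation and padding suffix making $w w^R$ into a member of $PPPAL_{i+1}$ is uniquely determined by $m$ (and hence by $n$); this follows from the uniqueness of the well-ordered delimiter sequence $S_{i+1, m}$ and the deterministic description of the padding procedure, so no genuinely new combinatorial work is needed beyond what is already established earlier in the section. The rest is a routine exponent comparison mirroring the endgame of \Cref{thm:hier1}.
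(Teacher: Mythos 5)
Your overall route is the paper's: witness $PPPAL_{i+1}$, the upper inclusion from \Cref{thm:pppal}, a lower bound on $D_{PPPAL_{i+1}}(n)$, then \Cref{thm:remscrimQTMtradeoff} together with $2^{-b_0 S(n)}\geq(\log n)^{-1/2}$ and an exponent comparison ending in $(\log n)^i\in o((\log n)^{i+1/2})$; that analytic endgame matches the paper's argument and is sound. The gap is in the dissimilarity step. You take $W_n=\Sigma_{\mathtt{ab}}^{m^{i+1}/2}$ and assert that each such raw string $w$ is a prefix of a member of $PPPAL_{i+1}$, distinguished from $w'$ by appending $w^R$ and then ``the punctuation and padding suffix''. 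But for $i+1\geq 2$ the punctuation is not a suffix: members of $PPPAL_{i+1}$ carry a binary delimiter after \emph{every} $m$-symbol segment, so any prefix of a member that is longer than $m$ already contains delimiter symbols. A string $w\in\Sigma_{\mathtt{ab}}^{m^{i+1}/2}$ is therefore not a prefix of any member at all, and no continuation $v$ with $\abs{wv}\leq n$ can place $wv$ in the language: if $wv\in PPPAL_{i+1}$, its leftmost segment would have length at least $m^{i+1}/2$, forcing (by the padding pattern, cf.\ Equation \ref{eq:mlogn}) a total length on the order of $2^{m^{i+1}/2}\gg n\approx m2^{m+1}$. Hence for every admissible $v$ both $wv$ and $w'v$ are nonmembers, the condition $wv\in L\iff w'v\notin L$ of \Cref{def:dissimilar} is never met, and the elements of your $W_n$ are pairwise \emph{not} $n$-dissimilar; the inequality $D_{PPPAL_{i+1}}(n)\geq\abs{W_n}$ does not follow. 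The obstacle you flagged (uniqueness of the suffix) is not where the difficulty lies—the delimiters must sit \emph{inside} the prefix.

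The repair is exactly the paper's construction and costs nothing quantitatively: for the length $n$ corresponding to leftmost-segment length $m$, let $W_n$ consist of the prefixes $l_s$ of actual members $s$, where $l_s$ is the minimal prefix containing the leftmost $m^i/2$ segments of the form $\Sigma_{\mathtt{ab}}^m$ (delimiters included), and distinguish $l_s$ from $l_{s'}$ by the genuine postfix $r_s$ satisfying $l_s r_s=s$ (so $l_s r_s\in PPPAL_{i+1}$ while $l_{s'}r_s\notin PPPAL_{i+1}$). Since the delimiters are fixed, $\abs{W_n}=2^{m\cdot m^i/2}=2^{m^{i+1}/2}$, and everything downstream of your bound $D_{PPPAL_{i+1}}(n)\geq 2^{\frac12 (d_{i+1}\log n)^{i+1}}$ goes through unchanged.
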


\begin{proof}
    For each $i$,  $PPPAL_{i+1}\in \mathsf{BQTISP}(2^{O((\log n)^{i+1})},O(1))$ by Theorem \ref{thm:pppal}. Let us also show that 
    \[PPPAL_{i+1} \notin \mathsf{BQTISP}(2^{O((\log n)^i)},o(\log \log n)).\]

For every sufficiently large even number $m$, let $PL_{i+1,m}$ be the set of all strings in $PPPAL_{i+1}$ whose leftmost segments are of length $m$. The punctuation and padding procedures dictate that   all strings in $PL_{i+1,m}$ in have the same length, say, $n$, and contain exactly $m^i$ segments of the form $\Sigma_{\mathtt{ab}}^m$ to the left of the unique occurrence of the substring cbin$_{i+1}(i+1)$.  We know by Lemma \ref{logn} that there exists a constant % $\kappa>0$ such that $m>(\log n)/\kappa$.
$d_{i+1}>0$  such that $m > d_{i+1} \log n$.

 For every string $s$ in $PL_{i+1,m}$, let $l_s$ denote the minimal prefix of $s$ that contains the leftmost  $m^i/2$ segments of the form $\Sigma_{ab}^m$, that is, $l_s$ contains the left half of the $m^{i+1}$-symbol palindrome underlying $s$. For every $s$, define $r_s$ as the unique postfix that satisfies $l_s r_s=s$. %$L(s)P(s)\in PL_{i+1,m}$.

%Recall that every member $p''$ of $PPPAL_i$ is obtained by padding a corresponding member $p'$ of the language $PPAL_i$, which is itself obtained from a palindrome $p$ of length $m^i$ by viewing $p$ as consisting of $m^{i-1}$ segments and inserting a binary delimiter substring after every such segment. For any such string $p''$ based on a palindrome $p$ of even length,  the length $m$ of the leftmost segment and the total number of segments of the form $\Sigma_{ab}^m$  are also necessarily even. Denoting the length of $p''$ by $n$, 
 
% Let $S_{m,i+1}$ denote the set $\Sigma_{ab}^{m^{i+1}/2}$, where $i>1$ and $m$ is an even number.  $PS_{m,i+1}=\{x \mid x=pp_{i+1}(ss^R), s\in S_{m,i+1}\}$ is defined to be the set  that contains precisely the members of $PPPAL|{i+1}$ that are based on palindromes whose left halves are in $S_{m,i+1}$.
 Let $W_n$ be the set $\{w \mid w=l_s, s\in PL_{i+1,m}\}$.
 
 Since every distinct pair of strings $l_s$, $l_{s'}$ in $W_n$ are distinguished by the string $r_s$,  we conclude that %there exists a constant $K''>0$ such that $D_{PPPAL_{i+1}}(n)\geq |W_n|\geq 2^{K''(\log n)^{i+1}}$ for infinitely many values of $n$.
$D_{PPPAL_{i+1}}(n)\geq |W_n|\geq 2^{\frac{m^{i+1}}{2}}   \geq 2^{\frac{(d_{i+1} \log n)^{i+1}}{2}}    \geq 2^{\frac{d_{i+1}^{i+1}}{2}(\log n)^{i+1}}$
for infinitely many values of $n$.
%Assume that $PPPAL_{i+1} \in BQTISP(2^{O((\log n)^i)},O(1))$. This would entail the existence of a 2QCFA, say, $M$, that recognizes $PPPAL_{i+1}$ with bounded error in expected runtime at most  $2^{E(\log n)^i}$ for some positive constant $E$.   By Theorem \ref{thm:remscrim}, 
 %  it must then be the case that $2^{E(\log n)^i}\in \Omega\bigl(2^{\alpha K''(\log n)^{i+1}}\bigr)$ for some positive constant $\alpha$. But $(\log n)^{i}\in o((\log n)^{i+1})$, which leads to a contradiction and completes the proof. 

Let us assume that 
    \[PPPAL_{i+1} \in \mathsf{BQTISP}(2^{O((\log n)^i)},S(n)),\]
for some $S(n)\in o(\log \log n)$.

Since $D_{PPPAL_{i+1}}(n)\geq  2^{\frac{d_{i+1}^{i+1}}{2}(\log n)^{i+1}}$, we know that 
\[\log \log D_{PPPAL_{i+1}}(n)\geq   \log \frac{d_{i+1}^{i+1}}{2} + (i+1)\log \log n,\] 
meaning that $\log \log D_{PPPAL_{i+1}}(n)\in   \Omega(\log \log n)$, and $S(n)$ is therefore small enough for  Theorem \ref{thm:remscrimQTMtradeoff} to be applicable.

Using that theorem, we see that there must exist \(e_0,  b_0 >0 \) such that
\[
2^{e_0(\log n)^i} \in \Omega\big(2^{\frac{d_{i+1}^{i+1}}{2}(\log n)^{i+1} 2^{-b_0 S(n)}-b_0 S(n)}\big).
\]

Since $S(n)\in o(\log \log n)$, for any positive constant $c_0$, $S(n)\leq \frac{c_0}{b_0} \log \log n$ for all sufficiently large $n$. We see that, for sufficiently large $n$,
\[2^{-b_0 S(n)}\geq 2^{-b_0 \frac{c_0}{b_0} \log \log n}=2^{-c_0  \log \log n}=(\log n)^{-c_0},\]
yielding
\[2^{e_0(\log n)^i} \in \Omega\big(2^{\frac{d_{i+1}^{i+1}}{2}(\log n)^{i+1-c_0}-b_0 S(n)}\big).\]
This is supposed to be true for all positive $c_0$. Let us consider  setting $c_0=1/2$.
Since any positive power of $\log n$ eventually dominates 
$\log \log n$, 
\[b_0 S(n) \leq \frac{\frac{d_{i+1}^{i+1}}{2}}{2}(\log n)^{i+1-1/2}\]
for all large $n$, which yields
%\[2^{e_0(\log n)^i} \in \Omega\big(2^{\frac{\frac{d_{i+1}^{i+1}}{2}}{2}(\log n)^{i+1-c_0}}\big).\]
\[2^{e_0(\log n)^i} \in \Omega\big(2^{\frac{{d_{i+1}^{i+1}}}{4} (\log n)^{i+1/2}}\big).\]
%Since this is supposed to be true for all positive $c_0$, we obtain
This is a contradiction, since $(\log n)^i \in o((\log n)^{i+\frac{1}{2}})$. We conclude that our assumption that $PPPAL_{i+1} \in \mathsf{BQTISP}(2^{O((\log n)^i)},S(n))$ is false, completing the proof.
\end{proof}

\section{Concluding remarks}\label{sec:conc}

Remscrim \cite{R21} notes the following time-space tradeoff: No QTM that uses $o(\log n)$ space can recognize \textit{PAL} in polynomial time, whereas the language can be recognized by a polynomial-time \textit{deterministic} Turing machine  that uses $O(\log n)$ space. A similar result can be proven for another language associated with the space threshold $\Theta(\log \log n)$, as we argue below.

For any integer $l\geq 0$, let bin$(l)$ denote the shortest string representing $l$ in the binary notation. Stearns et al. \cite{SHL65} describe a deterministic TM $M_{SHL}$ that recognizes the  language 
\[
\textit{SHL}=\{\text{bin}(0)\$\text{bin}(1)\$\text{bin}(2)\$\cdots\$\text{bin}(k) \mid k>0\}
\]
in $\Theta(\log \log n)$ space. It is easy to show that the rightmost ``segment'' (i.e. the postfix following the rightmost $\$$) of any $n$-symbol member of \textit{SHL} is of length $m\in\Theta(\log n)$, by using an argument similar to the one in Lemma \ref{logn}. $M_{SHL}$ uses its work tape to repeatedly count from 1 to at most $m$.

Consider changing the input alphabet by  adding a new delimiter symbol $\#$, and replacing the symbols $0$ and $1$ by  the four ``multi-track'' symbols $(0,\mathtt{a})$, $(0,\mathtt{b})$, $(1,\mathtt{a})$, and $(1,\mathtt{b})$. One can now visualize the input tape as containing binary numbers (as required in the definition of \textit{SHL}) on the top track, and strings on the alphabet $\Sigma_{\mathtt{ab}}$ on the bottom track. We define a new language, \textit{LLL}, on this new alphabet as follows: The top tracks of the members of $LLL$ contain the numbers from 0 to some integer $k$, exactly as in \textit{SHL}. The segments of members of \textit{LLL} are again delimited by $\$$'s, with the following exception: Let $m$ be the length of the rightmost segment. Counting from the right, the delimiter between the $m$th segment and the $(m+1)$st segment of a member of \textit{LLL} is not $\$$, but $\#$. Finally, the string sitting on the bottom tracks of the symbols in the postfix following the $\#$ is a palindrome. \textit{LLL} is the set of all strings satisfying all these conditions.

Note the similarities between \textit{LLL} and $PPPAL_2$, the language in the second level of the hierarchy examined in Section \ref{sec:hier2}: In both cases, recognizing the language involves determining whether a substring of length $\Theta((\log n)^2)$ is a palindrome or not. The argument used in the proof of Theorem \ref{thm:hier2} to show that no $o(\log \log n)$-space QTM can recognize $PPPAL_2$ in $2^{O(\log n)}$ (i.e., polynomial) time is therefore applicable to \textit{LLL} as well. On the other hand, it is easy for a deterministic polynomial-time TM  to recognize \textit{LLL} using $O(\log \log n)$ space, by employing the method of $M_{SHL}$ to check the top track, using a counter that counts up to $m$ to verify that the $\#$ delimiter is placed correctly, and performing the palindromeness check by comparing symbols at equal distances from the left and right ends of the region of interest by utilizing pointers of size at most $\log ((\log n)^2)\in O(\log \log n)$.

We conclude with an open question. In Theorem \ref{thm:adv}, we used $PPPAL_1$ as the first example of a language that is simultaneously recognizable by a polynomial-time 2QCFA and unrecognizable (regardless of time bounds) by any small-space classical TM. Remscrim \cite{R20} has shown that the word problems of all finitely generated virtually
 abelian groups are in $\mathsf{BQTISP}(n^{O(1)},O(1))$. It would be interesting to determine whether any such language can be shown to be outside $\mathsf{BPSPACE}(o(\log \log n))$ as well. In particular, is the word problem of the fundamental group of the Klein bottle in $\mathsf{BPSPACE}(o(\log \log n))$?

\section*{Acknowledgments}

The author thanks Utkan Gezer for all the useful discussions and for the preparation of the LaTeX infrastructure used in the figures. 

\bibliographystyle{abbrvnat}
% % use the following instead if you encounter problems 
% \bibliographystyle{alpha}
% \bibliography{sample-dmtcs}
% \label{sec:biblio}

% \begingroup
% \raggedright
% % \bibliographystyle{elsarticle-num}
% \bibliographystyle{plainnat}
% \bibliographystyle{splncs04}
\bibliography{references} 

\newcommand{\noop}[1]{}
\begin{thebibliography}{13}
\providecommand{\natexlab}[1]{#1}
\providecommand{\url}[1]{\texttt{#1}}
\expandafter\ifx\csname urlstyle\endcsname\relax
  \providecommand{\doi}[1]{doi: #1}\else
  \providecommand{\doi}{doi: \begingroup \urlstyle{rm}\Url}\fi

\bibitem[Ambainis and Watrous(2002)]{AW02}
A.~Ambainis and J.~Watrous.
\newblock Two-way finite automata with quantum and classical states.
\newblock \emph{Theoretical Computer Science}, 287\penalty0 (1):\penalty0 299--311, 2002.

\bibitem[Dwork and Stockmeyer(1990)]{DS90}
C.~Dwork and L.~Stockmeyer.
\newblock A time complexity gap for two-way probabilistic finite-state automata.
\newblock \emph{SIAM Journal on Computing}, 19\penalty0 (6):\penalty0 1011--1023, 1990.

\bibitem[Dwork and Stockmeyer(1992)]{DS92}
C.~Dwork and L.~Stockmeyer.
\newblock Finite state verifiers {I}: The power of interaction.
\newblock \emph{J. ACM}, 39\penalty0 (4):\penalty0 800--828, Oct. 1992.

\bibitem[Freivalds(1981)]{F81}
R.~Freivalds.
\newblock Probabilistic two-way machines.
\newblock In \emph{Proceedings of the International Symposium on Mathematical Foundations of Computer Science}, pages 33--45, 1981.

\bibitem[Freivalds and Karpinski(1994)]{FK94}
R.~Freivalds and M.~Karpinski.
\newblock Lower space bounds for randomized computation.
\newblock In \emph{Automata, Languages and Programming. ICALP 1994}, Lecture Notes in Computer Science 820. {Springer}, 1994.

\bibitem[Gezer and Say(2024)]{GS24}
M.~U. Gezer and A.~C.~C. Say.
\newblock $\mathsf{P}$ has polynomial-time finite-state verifiers, 2024.
\newblock URL \url{https://arxiv.org/abs/2306.09542}.

\bibitem[Remscrim(2020)]{R20}
Z.~Remscrim.
\newblock {The power of a single qubit: Two-way quantum finite automata and the word problem}.
\newblock In A.~Czumaj, A.~Dawar, and E.~Merelli, editors, \emph{47th International Colloquium on Automata, Languages, and Programming (ICALP 2020)}, volume 168 of \emph{Leibniz International Proceedings in Informatics (LIPIcs)}, pages 139:1--139:18, Dagstuhl, Germany, 2020. Schloss Dagstuhl -- Leibniz-Zentrum f{\"u}r Informatik.
\newblock (Full version: \url{https://eccc.weizmann.ac.il/report/2019/107/}).

\bibitem[Remscrim(2021)]{R21}
Z.~Remscrim.
\newblock {Lower bounds on the running time of two-way quantum finite automata and sublogarithmic-space quantum Turing machines}.
\newblock In J.~R. Lee, editor, \emph{12th Innovations in Theoretical Computer Science Conference (ITCS 2021)}, volume 185 of \emph{Leibniz International Proceedings in Informatics (LIPIcs)}, pages 39:1--39:20, Dagstuhl, Germany, 2021. Schloss Dagstuhl -- Leibniz-Zentrum f{\"u}r Informatik.
\newblock (Full version: \url{https://arxiv.org/abs/2003.09877}).

\bibitem[Say and Gezer(2024)]{SG24}
A.~C.~C. Say and M.~U. Gezer.
\newblock Unconditional proofs of quantumness between small-space machines, 2024.
\newblock URL \url{https://arxiv.org/abs/2412.02662}.

\bibitem[Shor(1997)]{S97}
P.~W. Shor.
\newblock Polynomial-time algorithms for prime factorization and discrete logarithms on a quantum computer.
\newblock \emph{SIAM Journal on Computing}, 26\penalty0 (5):\penalty0 1484--1509, 1997.

\bibitem[Stearns et~al.(1965)Stearns, Hartmanis, and Lewis]{SHL65}
R.~E. Stearns, J.~Hartmanis, and P.~M. Lewis.
\newblock Hierarchies of memory limited computations.
\newblock In \emph{6th Annual Symposium on Switching Circuit Theory and Logical Design (SWCT 1965)}, pages 179--190, 1965.
\newblock \doi{10.1109/FOCS.1965.11}.

\bibitem[Watrous(2003)]{W03}
J.~Watrous.
\newblock On the complexity of simulating space-bounded quantum computations.
\newblock \emph{Computational Complexity}, 12:\penalty0 48--84, 2003.

\bibitem[Yakary{\i}lmaz and Say(2010)]{YS1001}
A.~Yakary{\i}lmaz and A.~C.~C. Say.
\newblock {Succinctness of two-way probabilistic and quantum finite automata}.
\newblock \emph{{Discrete Mathematics \& Theoretical Computer Science}}, {Vol. 12 no. 4}, Jan. 2010.

\end{thebibliography}
% \endgroup

\begin{appendices}
%\sectionfont{\LARGE}
\section*{Appendix}
\label{sec:app}

\section{Running a 2QCFA on a virtual tape}%
\label{sec:subro}

As an example of how a 2QCFA that performs a certain task is adapted to obtain an algorithm that can be used as a subroutine for performing a generalized version of that task, we describe how the machine $M_{EQ}$ (Section \ref{sec:defs}) can form the basis for the construction a submachine embodying the \textsc{SameLength} procedure used in Sections \ref{sec:hier1} and \ref{sec:hier2} for a given tuple of parameters.

Recall that $M_{EQ}$ recognizes the language 
    $\textit{EQ}=\Set{\mathtt{a}^n \mathtt{b}^n | n\geq 0}$ with negative one-sided bounded error. The allowed error bound $\varepsilon$ can be set to any desired positive value by arranging the core algorithm presented in \cite{R20} accordingly. Therefore, strictly speaking, there are infinitely many versions of $M_{EQ}$,  corresponding to different values of $\varepsilon$. 
In the algorithm templates in Figures \ref{fig:qalg1} and \ref{fig:qalg2}, $\varepsilon$ appears as a  parameter, and it is to be understood that the smallest $M_{EQ}$ version (i.e. the one with the fewest states in its classical state set) consistent with the specified value of $\varepsilon$ will be used. (If more than one version satisfies this condition, the choice is arbitrary.)

The following arguments about the \textsc{SameLength} procedure are based on the assumption that
there exists a unique way of writing the string $\lend w \rend$ sitting on the input tape as a concatenation of \textit{tokens}, i.e., strings of length at least 1 on the alphabet $\Sigma\cup \{\lend,\rend\}$, from  a given finite set of interest. (The deterministic checks at the stages named (R) ensure that this promise will be true for the associated token sets if the machine reaches the later stages in the algorithm templates in Figures \ref{fig:qalg1} and \ref{fig:qalg2}.) 

While $M_{EQ}$'s input head initially scans the left endmarker  and can reach all cells up to and including the one containing the right endmarker as dictated by the algorithm, \textsc{SameLength} can be set to operate only within a specific subregion of the tape. As mentioned in Section \ref{sec:hier1}, this subregion is further subdivided into two intervals  named 
$I_{\mathbf{L}}$
and $I_{\mathbf{R}}$, and the machine is required to compare the length of a subsequence of tokens from a specified set in $I_{\mathbf{L}}$ with the length of another subsequence in 
$I_{\mathbf{R}}$. In Figures \ref{fig:qalg1} and \ref{fig:qalg2}, the boundaries of these intervals are determined by the three signposts $p_l$, $p_m$ and $p_r$, which are specified before every activation of the \textsc{SameLength} procedure. We note that all these signposts are ``locatable'', i.e., for each signpost of any such instance of \textsc{SameLength}, there exists a 2DFA subroutine that moves the head from its position immediately before the activation of that \textsc{SameLength} procedure to that signpost, and halts there, for any input which has passed the checks at stage (R).
 The constructed submachine for \textsc{SameLength}  will therefore be able to determine the position of its head relative to the intervals $I_{\mathbf{L}}$ and $I_{\mathbf{R}}$ at every point during its execution.

To complete the specification of a particular \textsc{SameLength} task, one also specifies two sets of tokens, $S_{\mathbf{L}}$ and $S_{\mathbf{R}}$. The procedure will compare the number of members of $S_{\mathbf{L}}$ in the interval $I_{\mathbf{L}}$ with the number of members of $S_{\mathbf{R}}$ in $I_{\mathbf{R}}$.

 We describe the submachine $M_{SL}$ embodying \textsc{SameLength}$(\langle S_{\mathbf{L}},I_{\mathbf{L}}\rangle,\langle S_{\mathbf{R}},I_{\mathbf{R}}\rangle,\varepsilon)$.  
 $M_{SL}$ starts execution after locating the head at position $p_l$.
 $M_{SL}$  simulates $M_{EQ}$ on a virtual tape that it obtains by ``translating'' the actual tape content: If the head is at position $p_l$ (resp. $p_r$), $M_{SL}$ will ``feed''  the left (resp. right) endmarker to $M_{EQ}$. If the head is in the interval $I_{\mathbf{L}}$ (resp. $I_{\mathbf{R}}$), $M_{SL}$ will feed the symbol $\mathtt{a}$ (resp. $\mathtt{b}$) to  $M_{EQ}$ whenever it scans a token from   $S_{\mathbf{L}}$ (resp. $S_{\mathbf{R}}$). $M_{SL}$ concludes its execution when the simulation reaches a halting state of $M_{EQ}$, rejecting if $M_{EQ}$ rejects.

As explained in Section \ref{sec:defs}, each step of a 2QCFA like $M_{EQ}$ consists of a quantum transition determined by the present state $q$ and the currently scanned input symbol $\sigma$, followed by a classical transition (updating the state and the head position) determined by $q$, $\sigma$, and the  outcome  of the quantum transition. We note that $M_{SL}$ is not guaranteed to be able to simulate each computational step of $M_{EQ}$ in a single step of its own, since  it may require multiple steps to move the head to skip over tokens  which do not belong to the set of interest for the currently scanned interval in the direction dictated with the simulated classical transition of $M_{EQ}$, and to traverse tokens within the set of interest whose lengths are longer than $1$. The expected runtime of $M_{SL}$ is polynomially bounded, as it cannot exceed the runtime of $M_{EQ}$, which is  known \cite{R20} to be polynomially bounded, on an input as long as the whole input string.

\end{appendices}
\end{document}